\newcommand{\comment}[1]{}
\def\tn{\textnormal}
\def\mc{\mathcal}
\def\RR{{\mathbb R}}
\def\PP{{\mathbb P}}
\def\NN{{\mathbb N}}
\def\Hom{\tn{Hom}}
\def\iHom{\Rightarrow}
\def\Aut{\tn{Aut}}
\def\Ob{\tn{Ob}}
\def\SEL*{\tn{SEL*}}
\def\hsp{\hspace{.3in}}
\def\singleton{{\{*\}}}
\def\LoopSchema{{\parbox{.5in}{\fbox{\xymatrix{\LMO{s}\ar@(l,u)[]^f}}}}}
\def\to{\rightarrow}
\def\from{\leftarrow}
\def\taking{\colon}
\def\inj{\hookrightarrow}
\def\surj{\twoheadrightarrow}
\def\too{\longrightarrow}
\def\fromm{\longleftarrow}
\def\ss{\subseteq}
\def\iso{\cong}
\def\|{{\;|\;}}
\def\m1{{-1}}
\def\ol{\overline}
\def\ul{\underline}
\newcommand{\LMO}[1]{\stackrel{#1}{\bullet}}
\def\ullimit{\ar@{}[rd]|(.3)*+{\lrcorner}}
\def\urlimit{\ar@{}[ld]|(.3)*+{\llcorner}}
\def\lllimit{\ar@{}[ru]|(.3)*+{\urcorner}}
\def\lrlimit{\ar@{}[lu]|(.25)*+{\ulcorner}}
\def\ulhlimit{\ar@{}[rd]|(.3)*+{\diamond}}
\def\urhlimit{\ar@{}[ld]|(.3)*+{\diamond}}
\def\llhlimit{\ar@{}[ru]|(.3)*+{\diamond}}
\def\lrhlimit{\ar@{}[lu]|(.3)*+{\diamond}}
\newcommand{\clabel}[1]{\ar@{}[rd]|(.5)*+{#1}}
\newcommand{\TriRight}[7]{\xymatrix{#1\ar[dr]_{#2}\ar[rr]^{#3}&&#4\ar[dl]^{#5}\\&#6\ar@{}[u] |{\Longrightarrow}\ar@{}[u]|>>>>{#7}}}
\newcommand{\TriLeft}[7]{\xymatrix{#1\ar[dr]_{#2}\ar[rr]^{#3}&&#4\ar[dl]^{#5}\\&#6\ar@{}[u] |{\Longleftarrow}\ar@{}[u]|>>>>{#7}}}
\newcommand{\TriIso}[7]{\xymatrix{#1\ar[dr]_{#2}\ar[rr]^{#3}&&#4\ar[dl]^{#5}\\&#6\ar@{}[u] |{\Longleftrightarrow}\ar@{}[u]|>>>>{#7}}}
\newcommand{\arr}[1]{\ar@<.5ex>[#1]\ar@<-.5ex>[#1]}
\newcommand{\arrr}[1]{\ar@<.7ex>[#1]\ar@<0ex>[#1]\ar@<-.7ex>[#1]}
\newcommand{\arrrr}[1]{\ar@<.9ex>[#1]\ar@<.3ex>[#1]\ar@<-.3ex>[#1]\ar@<-.9ex>[#1]}
\newcommand{\arrrrr}[1]{\ar@<1ex>[#1]\ar@<.5ex>[#1]\ar[#1]\ar@<-.5ex>[#1]\ar@<-1ex>[#1]}
\newcommand{\To}[1]{\xrightarrow{#1}}
\newcommand{\Too}[1]{\xrightarrow{\ \ #1\ \ }}
\newcommand{\From}[1]{\xleftarrow{#1}}
\newcommand{\Adjoint}[4]{\xymatrix@1{{#2}\ar@<.5ex>[r]^-{#1} &{#3} \ar@<.5ex>[l]^-{#4}}}
\def\id{\tn{id}}
\def\Cat{{\bf Cat}}
\def\Oprd{{\bf Oprd}}
\def\CMon{{\bf CMon}}
\def\Sets{{\bf Sets}}
\def\Rel{\mcR\tn{el}}
\def\JRel{J\mcR\tn{el}}
\def\Type{{\bf Type}}
\def\Set{{\bf Set}}
\def\sets{{\text \textendash}{\bf Alg}}
\def\dispInt{\parbox{.1in}{$\int$}}
\def\bhline{\Xhline{2\arrayrulewidth}}
\def\bbhline{\Xhline{2.5\arrayrulewidth}}
\def\mcE{\mc{E}}
\def\mcN{\mc{N}}
\def\mcO{\mc{O}}
\def\mcR{\mc{R}}
\def\mcS{\mc{S}}
\def\mcT{\mc{T}}
\def\bfS{{\bf S}}\def\bfSs{{\bf Ss}}
\def\bfT{{\bf T}}\def\bfTs{{\bf Ts}}
\def\tnN{\tn{N}}
\def\monOb{{\blacktriangle}}
\def\Fin{{\bf Fin}}
\def\PrO{{\bf PrO}}
\def\JLat{{\bf JLat}}
\def\JLats{{\bf JLats}}
\newcommand{\disc}[1]{{\ul{#1}}}
\let\c@figure\c@equation\makeatother 
\newtheorem{lemma}[subsubsection]{Lemma}
\newtheorem{proposition}[subsubsection]{Proposition}
\theoremstyle{remark}
\newtheorem{remark}[subsubsection]{Remark}
\newtheorem{example}[subsubsection]{Example}
\newtheorem{application}[subsubsection]{Application}
\newtheorem{warning}[subsubsection]{Warning}
\newtheorem{question}[subsubsection]{Question}
\newtheorem{guess}[subsubsection]{Guess}
\theoremstyle{definition}
\newtheorem{definition}[subsubsection]{Definition}
\newtheorem{notation}[subsubsection]{Notation}
\newtheorem{conjecture}[subsubsection]{Conjecture}
\begin{document}

\title[The operad of wiring diagrams]{The operad of wiring diagrams:\\
formalizing a graphical language for databases, recursion, and plug-and-play circuits}

\author{David I. Spivak}

\address{Department of Mathematics, Massachusetts Institute of Technology, Cambridge MA 02139}

\email{dspivak@math.mit.edu}

\thanks{This project was supported by ONR grants N000141010841 and N000141310260.}

\maketitle

\begin{abstract}

Wiring diagrams, as seen in digital circuits, can be nested hierarchically and thus have an aspect of self-similarity. We show that wiring diagrams form the morphisms of an operad $\mcT$, capturing this self-similarity. We discuss the algebra $\Rel$ of mathematical relations on $\mcT$, and in so doing use wiring diagrams as a graphical language with which to structure queries on relational databases. We give the example of circuit diagrams as a special case. We move on to show how plug-and-play devices and also recursion can be formulated in the operadic framework as well. Throughout we include many examples and figures.

\end{abstract}

\tableofcontents

\section{Introduction}

In the classic book {\em The Society of Mind} \cite{Min}, Marvin Minsky proposes that the human mind is composed of a society of agencies, each of which is composed of a society of smaller agencies, and so on. While the lower-level agencies could not be considered intelligent, their composition into structured wholes yields behavior that is markedly intelligent. He argues that this self-similar or recursive structure emerges naturally and is efficient for processing information. In this paper we offer a mathematical framework for modeling self-similar information processing structures of a certain kind.

In particular we will be interested in the self-similar nature of wiring diagrams, like those found in the design of digital circuits. One can wire together simple logic gates to form more complex ones, and wire these together to form still more complex circuits. Similarly, one may imagine that neurons can be wired together to form more complex information-processing centers in the brain, and that these can be wired together to form still more complex agencies. 

We will discuss the so-called {\em wiring diagrams operad} $\mcS$, which models the notion of wiring smaller entities together to form larger ones. We will be careful to introduce all the necessary operadic machinery, but it can be a bit dense for those who are unacquainted. For this reason, we will also draw plenty of pictures (such pictures constitute a large portion of the length of this document), which look like Figure \ref{fig:empty wiring diagram}.

\begin{figure}
\parbox{2.6in}{\includegraphics[height=2in]{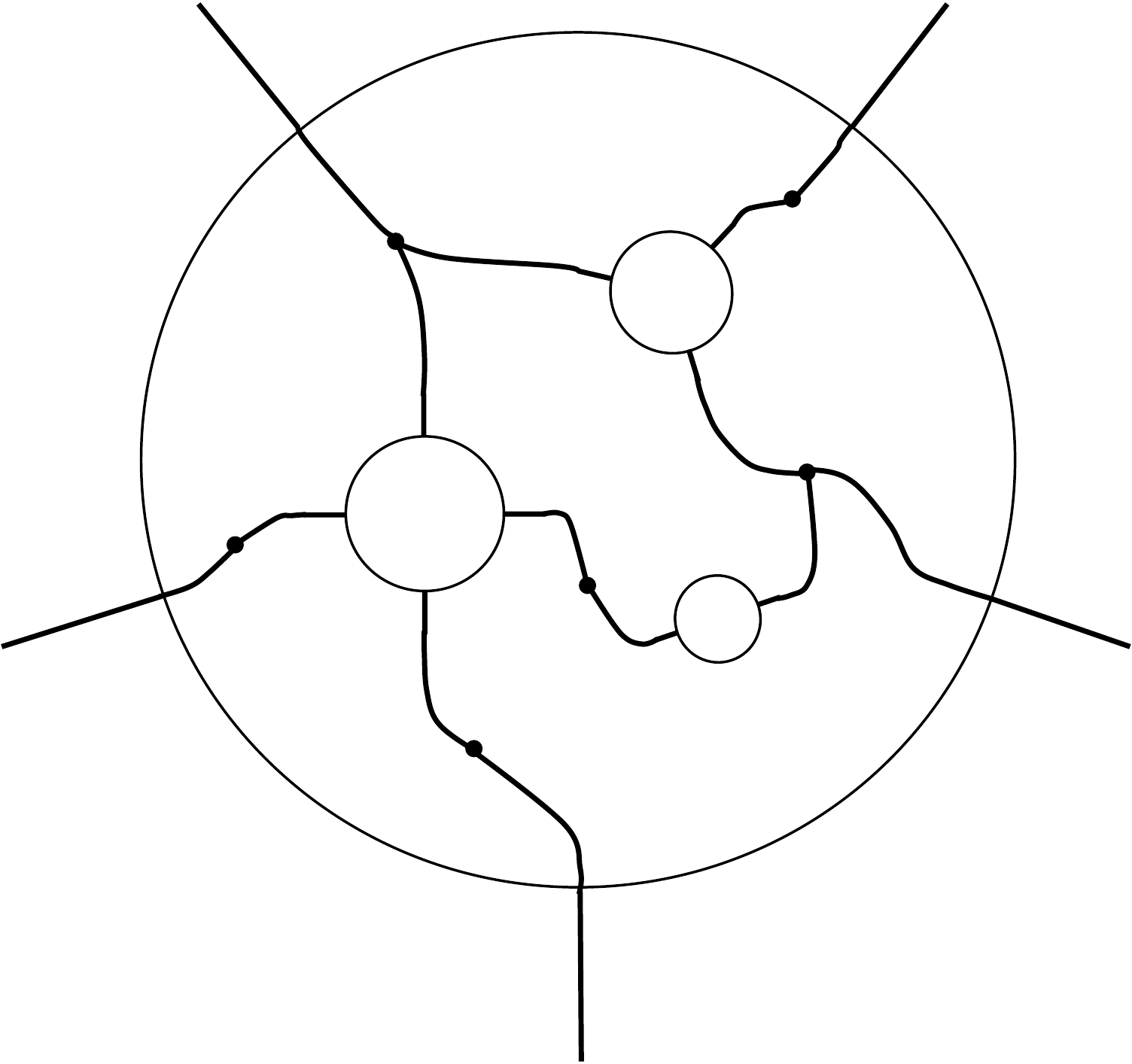}}
\caption{
An entity composed of three smaller entities, all wired together.
}\label{fig:empty wiring diagram}
\end{figure}

Operads were invented in the 1970s by \cite{BV} and \cite{May}. We will recall the definition of operads in Definition \ref{def:operad}, but it might be phrased in the current language using the slogan ``a wiring diagram of wiring diagrams is a wiring diagram". That is, if each little circle in Figure \ref{fig:empty wiring diagram} is filled in with its own little wiring diagram, the result will be a big outer circle with tiny ones wired together inside. This will be spelled out in Example \ref{ex:singly-typed wd}.

Deciding what kind of object each circle represents (whether it be a digital circuit, an agency in the brain, a mathematical relation, etc.), and deciding how a wiring diagram such as that in Figure \ref{fig:empty wiring diagram} produces a new such object from its constituent objects, is tantamount to choosing an {\em algebra} on the wiring diagrams operad. We will concentrate on the algebra $\Rel$ of mathematical relations, in which an outer relation is built by applying a wiring diagram to a set of constituent relations. This algebra $\Rel$ has immediate applications to visually representating relational database queries. 
\begin{align}\label{dia:3 queries}
\includegraphics[width=\textwidth]{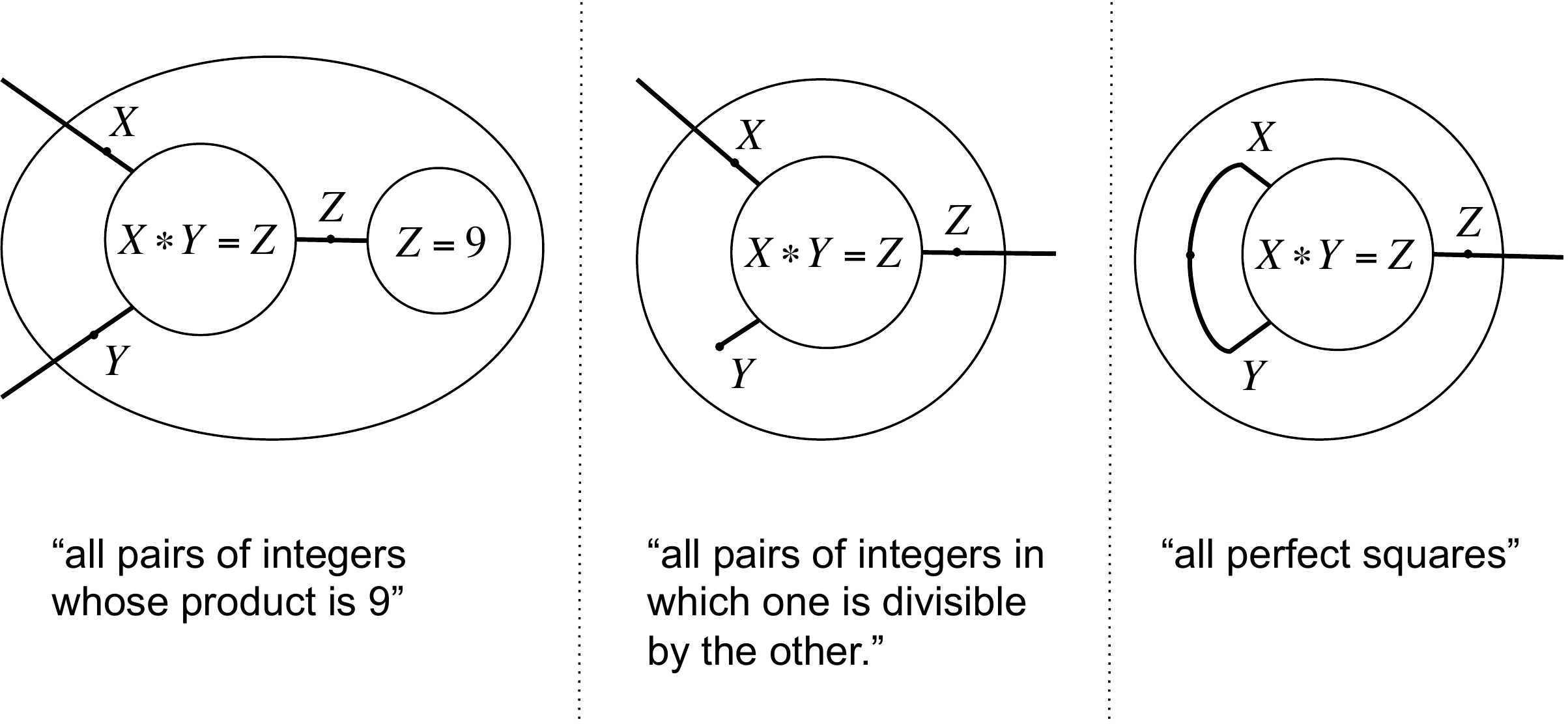}
\end{align}
But as the title of the paper suggests, one can also express recursion and so-called ``plug-and-play", or hot-plugging, in these terms. See Section \ref{sec:closed operads}.

Another reason for focusing on the algebra of relations is that digital circuits fall easily within this framework and provide a nice source of concrete examples. As a more abstract example, the algebra of relations can also serve as a crude model for agencies in the brain. We simply think of each agency as maintaining a relationship on its inputs. For example a person might be said to maintain a relationship between what she sees, hears, feels, and does; this can be represented as a 4-ary relation. Similarly, the visual cortex might be said to maintain relationships between the signals it receives from the eyes and from the rest of the brain. 

In \cite{JS}, it is shown that in the near future, cyber-physical systems will likely continue to blur the line between humans and the machines on which they work. One may imagine that the same hierarchical structures we use to conceive the human mind can be extended beyond the individual to encompass interactions between multiple humans and machines. Therefore, it is useful to have a single mathematical model, such as the operadic model presented throughout this paper, in which all such levels and layers can be discussed in a unified way.

More broadly, we hope this paper will introduce operads to a more general audience. The wiring diagrams operad discussed here is but a single possibility among many that may prove useful. Whenever one wants to discuss a situation in which structures are built out of other structures and the functionality follows suit, operads and their algebras may be employable.

\subsection{Related work}

There are of course many books on digital circuits, e.g. \cite{Eme}, and all that use wiring diagrams are related to the present paper, as we explain in Example \ref{ex:digital circuit}. Another related concept is that of the propagator network (\cite{RS}), in which computation is performed by a ``network of  autonomous but interconnected machines". Clearly, the underlying idea behind our work and theirs is quite similar: machines made of networked machines. This paper can be seen as providing a mathematical underpinning to this idea in terms of operads and their algebras. For example, no such distinction seems to be alluded to in the work of \cite{RS}.

Finally, there is a similarity between this work and previous work by the author, including \cite{Sp1}, \cite{Sp2}, and \cite{Sp3}. All of these provide different conceptions of databases. The one that appears closest to the usual relational model is the present paper, whereas the one that is most aligned with basic category theory is \cite{Sp3}. The category-theoretic intersection between these viewpoints seems to be that databases are nicely modeled by sheaves in various guises.

\subsection{Layout of this paper}

This paper will be structured as follows. In Section \ref{sec:basic defs}, we will define most of the necessary operadic machinery and introduce our main operad of interest, the singly-typed wiring diagrams operad $\mcS$. We will also discuss our main algebras of interest, the relational $\mcS$-algebras $\Rel_A$. Examples in this section include digital circuits and the three queries in (\ref{dia:3 queries}). Section \ref{sec:basic defs} constitutes the heart of the paper; the rest of the paper amounts to variations and explorations of its theme.

In Section \ref{sec:brief analysis} we discuss the notion of operad-theoretic invariants of the relational $\mcS$-algebra. After proving that certain attempts to find such invariants come up short, we conjecture that there are no nontrivial invariants (see Conjecture \ref{conj:quotient-free}). In Section \ref{sec:typed wd operad} we generalize the operad $\mcS$ to a typed version $\mcT$ in which each wire carries values of a specific type. We show how all the relational $\mcS$-algebras $\Rel_A$ extend to a single $\mcT$-algebra $\Rel$. In this section we more carefully discuss database queries (conjunctive and disjunctive), and compare to the wiring diagrams operad. In Section \ref{sec:closed operads} we show that $\mcS$ and $\mcT$ are in fact closed operads, and we give applications to recursion (in particular the factorial function) and the idea of plug-and-play devices. Finally, in Section \ref{sec:future work} we discuss possibilities for future work.

\subsection{Acknowledgments}

I'd like to thank my former UROP student Gaurav Singh, who was the first to write down a recursive function in terms of wiring diagrams. His diagram for the factorial function has a prominant role in this paper: see (\ref{dia:factorial}). I'd also like to thank Dave Balaban, Al Jones, Emmy Murphy, Eswaran Subrahmanian, and especially Nat Stapleton for many useful and interesting conversations.

\section{Operads and wiring diagrams}\label{sec:basic defs}

In this section we introduce operads, as well as our main operad of interest: the singly-typed wiring diagrams operad $\mcS$. We will also introduce algebras on an operad, as well as our main algebra of interest: the relational $\mcS$-algebra $\Rel$. 

We assume that the reader is familiar with basic category theory. Introductions to the subject include \cite{Mac} and \cite{Awo} for pure mathematicians, \cite{BW} and \cite{Pie} for computer scientists, and \cite{Sp1} for a general scientific audience. For those unacquainted with category theory, we hope that the diagrams will get across the basic idea.

\begin{notation}

We denote the category of sets by $\Set$ and the full subcategory of finite sets by $\Fin\ss\Set$.

If $n\in\NN$ is a natural number, we write $\disc{n}\in\Set$ to denote the set $\disc{n}:=\{1,2,\ldots,n\}$. We may also use symbols such as $n$ to denote finite sets, e.g. $n\in\Ob(\Fin)$. Given a finite set $n\in\Ob(\Fin)$, we denote its cardinality by $|n|\in\NN$. In other words, the following composite is the identity on $\NN$:
$$\NN\To{n\mapsto\disc{n}}\Ob(\Fin)\To{n\mapsto|n|}\NN.$$

For a set $X$, we let $\PP(X)=\{R\ss X\}$ denote its power set. We denote a one-element set by $\singleton$.



\end{notation}

\subsection{Operads}

\begin{warning}\label{warn:operad}

Throughout this paper, we use the word {\em operad} to mean what is generally called a {\em symmetric colored operad} or a {\em symmetric multicategory}. This abbreviated nomenclature is not new, for example it is used in \cite{Lur}. Hopefully no confusion will arise. For a full treatment of operads, multicategories, and how they fit into a larger mathematical context, see \cite{Lei}.

\end{warning}

\begin{definition}\label{def:operad}

An {\em operad} $\mcO$ consists of some constituents (A. objects, B. morphisms, C. identities, D. compositions) conforming to some laws (1. identity law, 2. associativity law). The required constituents are as follows.
\begin{enumerate}[\hsp A.]
\item There is a collection $\Ob(\mcO)$, each element of which is called an {\em object} of $\mcO$.
\item For each object $y\in\Ob(\mcO)$, finite set $n\in\Ob(\Fin)$, and $n$-indexed set of objects $x\taking n\to\Ob(\mcO)$, there is a set $\mcO_n(x;y)\in\Ob(\Set)$. Its elements are called {\em morphisms from $x$ to $y$} in $\mcO$. 
\item For every object $x\in\Ob(\mcO)$, there is a specified morphism denoted $\id_x\in\mcO_1(x;x)$ called {\em the identity morphism on $x$}.
\item Let $s\taking m\to n$ be a morphism in $\Fin$. Let $z\in\Ob(\mcO)$ be an object, let $y\taking n\to\Ob(\mcO)$ be an $n$-indexed set of objects, and let $x\taking m\to\Ob(\mcO)$ be an $m$-indexed set of objects. For each element $i\in n$, write $m_i:=s^\m1(i)$ for the pre-image of $s$ under $i$, and write $x_i=x|_{m_i}\taking m_i\to\Ob(\mcO)$ for the restriction of $x$ to $m_i$. Then there is a function 
\begin{align}\label{dia:composition formula}
\circ\taking\mcO_n(y;z)\times\prod_{i\in n}\mcO_{m_i}(x_i;y(i))\too\mcO_{m}(x;z),
\end{align} 
called {\em the composition formula}.
\end{enumerate}
Given an $n$-indexed set of objects $x\taking n\to\Ob(\mcO)$ and an object $y\in\Ob(\mcO)$, we sometimes abuse notation and denote the set of morphisms from $x$ to $y$ by $\mcO(x_1,\ldots,x_n;y)$.
\footnote{There are three abuses of notation when writing $\mcO(x_1,\ldots,x_n;y)$, which we will fix one by one. First, it confuses the set $n\in\Ob(\Fin)$ with its cardinality $|n|\in\NN$. But rather than writing $\mcO(x_1,\ldots,x_{|n|};y)$, it would be more consistent to write $\mcO(x(1),\ldots,x(|n|);y)$, because we have assigned subscripts another meaning in part D. But even this notation unfoundedly suggests that the set $n$ has been endowed with a linear ordering, which it has not. This may be seen as a more serious abuse, but see Remark \ref{rem:symmetry}.}
We may write $\Hom_\mcO(x_1,\ldots,x_n;y)$, in place of $\mcO(x_1,\ldots,x_n;y)$, when convenient. We can denote a morphism $\phi\in\mcO_n(x;y)$ by $\phi\taking x\to y$ or by $\phi\taking (x_1,\ldots,x_n)\to y$; we say that each $x_i$ is a {\em domain object} of $\phi$ and that $y$ is the {\em codomain object} of $\phi$. We use infix notation for the composition formula, e.g. writing $\psi\circ(\phi_1,\ldots,\phi_n)$.

Here are the laws that govern the above constituents:
\begin{enumerate}[\hsp 1.]
\item for every $x_1,\ldots,x_n,y\in\Ob(\mcO)$ and every morphism $\phi\taking(x_1,\ldots,x_n)\to y$, we have
$$\phi\circ(\id_{x_1},\ldots,\id_{x_n})=\phi\hsp\tn{and}\hsp\id_y\circ\phi=\phi;$$ \hsp{and};
\item an appropriate associative law.
\footnote{Here is the associative law. Let $m\To{s}n\To{t}p$ be composable morphisms in $\Fin$. Let $z\in\Ob(\mcO)$ be an object, let $y\taking p\to\Ob(\mcO)$, $x\taking n\to\Ob(\mcO)$, and $w\taking m\to\Ob(\mcO)$ respectively be a $p$-indexed, $n$-indexed, and $m$-indexed set of objects. For each $i\in p$, write $n_i=t^\m1(i)$ for the pre-image and $x_i\taking n_i\to\Ob(\mcO)$ for the restriction. Similarly, for each $k\in n$ write $m_k=s^\m1(k)$ and $w_k\taking m_k\to\Ob(\mcO)$; for each $i\in p$, write $m_{i,-}=(t\circ s)^\m1(i)$ and $w_{i,-}\taking m_{i,-}\to\Ob(\mcO)$; for each $j\in n_i$, write $m_{i,j}:=s^\m1(j)$ and $w_{i,j}\taking m_{i,j}\to\Ob(\mcO)$. Then the diagram below commutes:
$$\xymatrix@=18pt{
&{\hspace{.8in}}\save[]+<0cm,0cm>*\txt<30pc>{$
\mcO_p(y;z)\times\prod_{i\in p}\mcO_{n_i}(x_i;y(i))\times\prod_{i\in p,\ j\in n_i}\mcO_{m_{i,j}}(w_{i,j};x_i(j))$}
\ar[rd]\ar[ld]\restore\\
\mcO_n(x;z)\times\prod_{k\in n}\mcO_{m_k}(w_k;x(k))\ar[dr]&&\mcO_p(y;z)\times\prod_{i\in p}\mcO_{m_{i,-}}(w_{i,-};y(i))\ar[dl]\\
&\mcO_m(w;z)
}
$$
}

\end{enumerate}

\end{definition}

\begin{remark}\label{rem:symmetry}

In this remark we will discuss the abuse of notation in Definition \ref{def:operad} and how it relates to an action of a symmetric group on each morphism set in our definition of operad. We follow the notation of Definition \ref{def:operad}, especially following the use of subscripts in the composition formula.

Suppose that $\mcO$ is an operad, $z\in\Ob(\mcO)$ is an object, $y\taking n\to\Ob(\mcO)$ is an $n$-indexed set of objects, and $\phi\taking y\to z$ is a morphism. If we linearly order $n$, enabling us to write $\phi\taking (y(1),\ldots,y(|n|))\to z$, then changing the linear ordering amounts to finding an isomorphism of finite sets $\sigma\taking m\To{\iso} n$, where $|m|=|n|$. Let $x=y\circ\sigma$ and for each $i\in n$, note that $m_i=\sigma^\m1(\{i\})=\{\sigma^\m1(i)\}$, so $x_i=x|_{\sigma^\m1(i)}=y(i)$. Taking $\id_{x_i}\in\mcO_{m_i}(x_i;y(i))$ for each $i\in n$, and using the identity law, we find that the composition formula induces a bijection $\mcO_n(y;z)\To{\iso}\mcO_m(x;z)$, which we might denote by 
$$\sigma\taking\mcO(y(1),y(2),\ldots,y(n);z)\iso\mcO\big(y(\sigma(1)),y(\sigma(2)),\ldots,y(\sigma(n));z\big).$$
In other words, there is an induced group action of $\Aut(n)$ on $\mcO_n(x;z)$, where $\Aut(n)$ is the group of permutations of an $n$-element set.

Throughout this paper, we will permit ourselves to abuse notation and speak of morphisms $\phi\taking (x_1,x_2,\ldots,x_n)\to y$ for a natural number $n\in\NN$, without mentioning the abuse inherent in choosing an order, so long as it is clear that permuting the order of indices would not change anything up to canonical isomorphism.

\end{remark}

\begin{example}\label{ex:Sets}

We define the operad of sets, denoted $\Sets$, as follows. We put $\Ob(\Sets):=\Ob(\Set)$. Given a natural number $n\in\NN$ and objects $X_1,\ldots,X_n, Y\in\Ob(\Sets)$, we define 
$$\Sets(X_1,X_2,\ldots,X_n;Y):=\Hom_\Set(X_1\times X_2\times\cdots\times X_n,Y).$$ 
For any $X\in\Ob(\Sets)$ the identity morphism $\id_X\taking X\to X$ is the same identity as that in $\Set$. 

The composition formula is as follows. Suppose given a set $Z\in\Ob(\Set)$, a natural number $n\in\NN$, sets $Y_1,\ldots,Y_n\in\Ob(\Set)$, and for each $i$ natural numbers $m_1,\ldots,m_n\in\NN$ and sets $X_{i,1},\ldots,X_{i,m_i}\in\Ob(\Set)$. Suppose furthermore that we have composable morphisms: a function $g\taking Y_1\times\cdots\times Y_n\to Z$ and for each $i\in\disc{n}$ a function $f_i\taking X_{i_1}\times\ldots\times X_{i,m_i}\to Y_i$. We need a function $X_{1,1}\times X_{1,2}\times\cdots\times X_{n,m_n}\to Z$, which we take to be the composite $$(X_{1,1}\times\cdots\times X_{1,m_1})\times\cdots\times(X_{n,1}\times\cdots\times X_{n,m_n})\To{\;f_1\times\cdots\times f_n\;}Y_1\times\cdots\times Y_n\To{\;g\;}Z.$$

It is not hard to see that this composition formula is associative.

\end{example}

\begin{example}\label{ex:Ss}

Let $\bfS$ be any category with finite products. Such categories include: the category $\Fin$ of finite sets, the category $\PrO$ of preorders, the category $\JLat$ of \href{http://en.wikipedia.org/wiki/Semilattice}{\em join-semilattices}, the category $\Type$ of types in a programming language, the category $\Cat$ of small categories, etc. We then define an operad $\bfSs$ with $$\Ob(\bfSs)=\Ob(\bfS)\hsp\tn{and}\hsp\bfSs_n(X_1,\ldots,X_n;Y)=\Hom_\bfS(X_1\times\cdots\times X_n,Y).$$ The composition formula is similar to that discussed in Example \ref{ex:Sets}.

\end{example}

\begin{example}\label{ex:commutative operad}

The {\em commutative operad} $\mcE$ has one object, say $\Ob(\mcE)=\{\monOb\}$, and for each $n\in\NN$ it has a single $n$-ary morphism, $\mcE_n(\blacktriangle,\ldots,\blacktriangle;\blacktriangle)=\{\mu_n\}$. In fact $\mcE$ can be viewed as a special case of Example \ref{ex:Ss} where $\bfS=\disc{1}$ is the terminal category; i.e. $\mcE\iso\disc{1}{\bf s}$. We will see in Lemma \ref{lemma:terminal operad} that $\mcE$ is the terminal object in the category of operads. (Again, recall from Warning \ref{warn:operad} that all of our operads are actually symmetric colored operads, and we are saying that $\mcE$ is the terminal object in that category.)

\end{example}

\begin{example}[{\bf Main Operad Example}]\label{ex:singly-typed wd}

In this example we discuss the operad $\mcS$ of singly-typed wiring diagrams. The objects of $\mcS$ are simply finite sets, $\Ob(\mcS)=\Ob(\Fin)$. However, we refer to each object as a {\em singly-typed star} and we draw an object, say $X:=\{a,b,c,d,e\}\in\Ob(\mcS)$, as follows:\\

\begin{align}\label{dia:unlabeledOb}
\parbox{1.35in}{\includegraphics[width=1.35in]{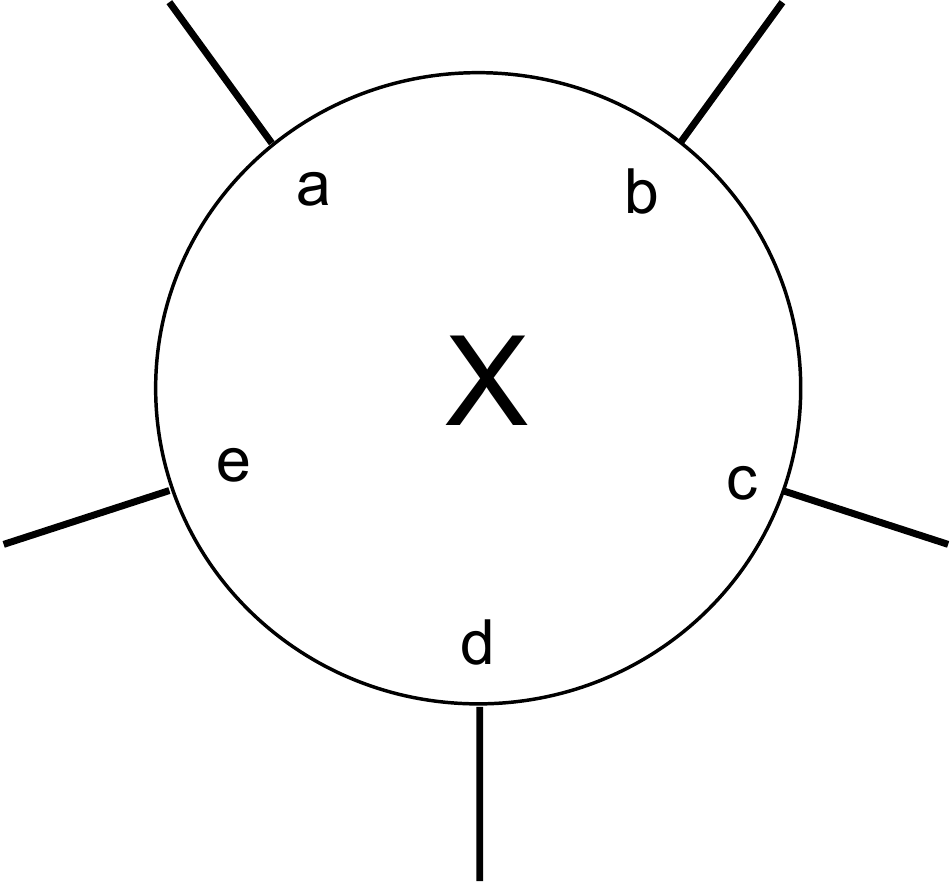}}
\end{align}
The order of the elements around the circle does not matter: $X$ is just a set. Each element of $X$ is called a {\em wire of $X$}.

Let $Y\in\Ob(\mcS)$ be an object, let $n\in\NN$ be a natural number, and let $X_1,\ldots,X_n\in\Ob(\mcS)$ be singly-typed stars. A morphism $\phi\taking X_1,\ldots,X_n\to Y$ consists of a cospan, 
\footnote{In fact, a morphism $\phi\taking(X_1,\ldots,X_n)\to Y$ in $\mcS$ is defined to be a cospan {\em up to isomorphism}. In other words, given a commutative diagram as below,
$$\xymatrix{
&C\ar[dd]^\iso\\
X_1\amalg\cdots\amalg X_n\ar[ur]^f\ar[dr]_{f'}&&Y\ar[ul]_{g}\ar[dl]^{g'}\\
&C'}
$$
we say that the cospans $(f,g)$ and $(f',g')$ {\em represent} the same morphism $X_1,\ldots,X_n\To{\phi}Y$. This just means that two wiring diagrams are considered to be the same morphism in $\mcS$ if the only difference between them is a renaming of cables.
}
i.e. a finite set $C\in\Ob(\Fin)$ and functions 
$$\xymatrix{
&Y\ar[d]^g\\
X_1\amalg\cdots\amalg X_n\ar[r]_-f&C.
}
$$ 
We refer to such a morphism as an {\em singly-typed wiring diagram} (or a {\em wiring diagram} or even a {\em WD}, for short); we refer to the domain objects $X_1,\ldots,X_n$ as the {\em inner stars}, the codomain object $Y$ as the {\em outer star}, and the set $C$ as the {\em set of cables}. For any wire $y\in Y$ with $g(y)=c$ (respectively $x\in X_i$ with $f(x)=c$), we say that $y$ (respectively $x$) is {\em soldered onto cable $c$}. Figure \ref{fig:wiring diagram} should make all this quite a bit clearer.
\begin{figure}[h]
\begin{align}\label{dia:unlabeledMorphism}
\parbox{3.1in}{\includegraphics[width=3.1in]{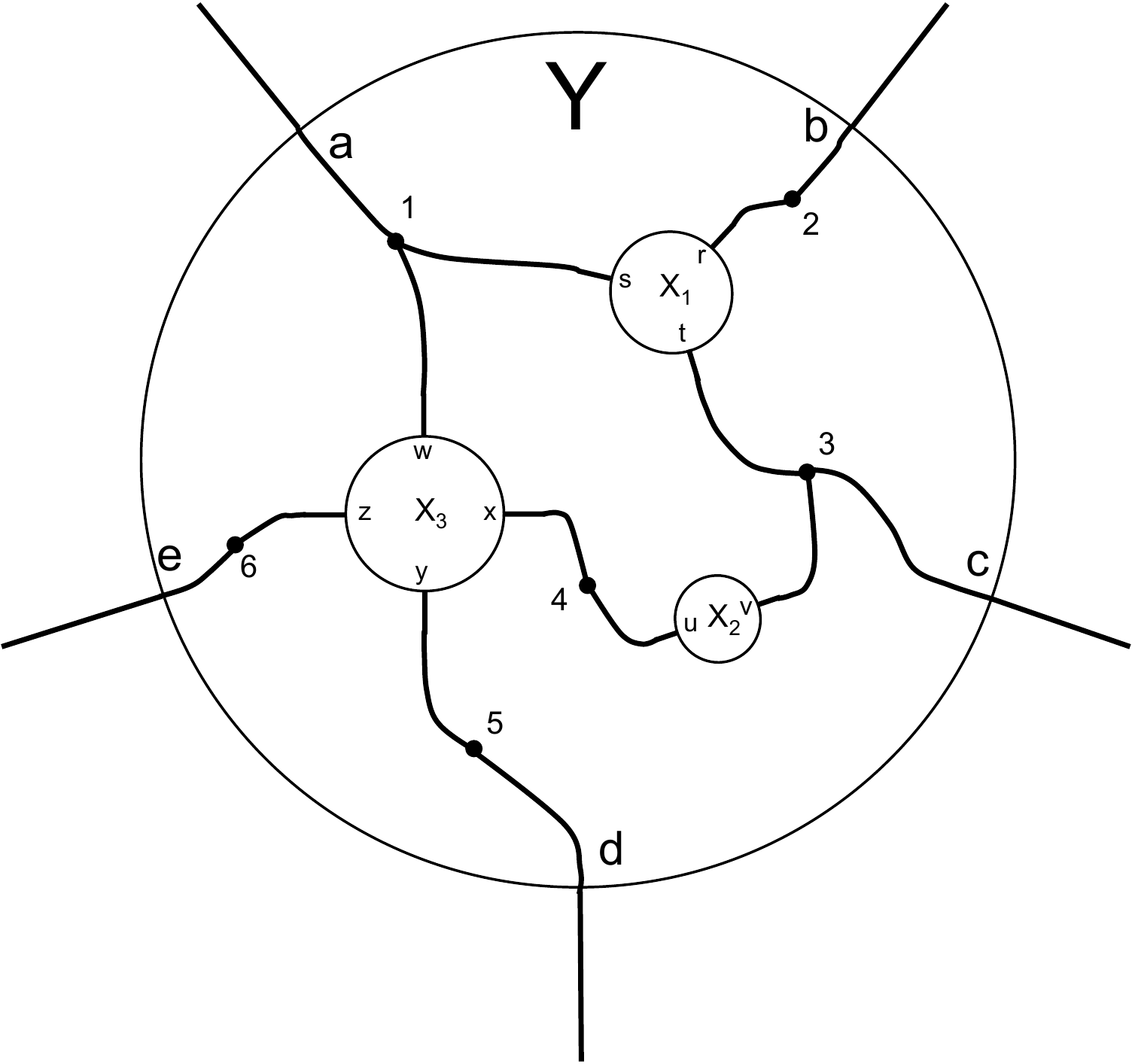}}
\end{align}
\caption{
Here is a picture of $\phi\taking (X_1,X_2,X_3)\to Y$. The codomain of $\phi$ is the outer star $Y=\{a,b,c,d,e\}$ and the $n=3$ domain objects are the inner stars $X_1=\{r,s,t\}, X_2=\{u,v\},$ and $X_3=\{w,x,y,z\}$. The morphism $\phi$ has cables $C=\{1,2,3,4,5,6\}$, and our picture shows, for example, that $f_1(s)=f_3(w)=g(a)=1$, i.e. that wires $s,w,$ and $a$ are soldered onto cable $1$.
}\label{fig:wiring diagram}
\end{figure}

Given composable morphisms $\phi'\taking Y_1,\ldots,Y_n\to Z$ and $\phi_1,\ldots,\phi_n$, where each $\phi_i\taking X_{i,1},\ldots,X_{i,m_i}\to Y_i$, we can arrange their representing cospans in the form of the W-shaped diagram below. The composition formula is given by taking the pushout $P$ of the middle square.
\begin{align}\label{dia:untyped composition}
\xymatrix@=35pt{&&Z\ar[d]^{g'}\\
&\coprod_{i\in n}Y_i\ar[r]^{f'}\ar[d]_{\amalg_i g_i}&D\ar[d]\\
\coprod_{i\in n}\coprod_{j\in m_i}X_{i,j}\ar[r]_-{\amalg_i f_i}&\coprod_{i\in n}C_i\ar[r]&P\lrlimit
}
\end{align}
Here $\phi'\circ(\phi_1,\ldots,\phi_n)$ is represented by the cospan $\amalg_{i\in n}\amalg_{j\in m_i}X_{i,j}\to P\from Z$.

The composition formula for $\phi'\circ(\phi_1,\ldots,\phi_n)$ is intuitive when drawing pictures. We begin with wiring diagrams $\phi_i$ of little inner stars inside intermediary stars and a wiring diagram $\phi'$ of intermediary stars inside a big outer star. The wires of the intermediary stars $Y_i$ will match interior cables to exterior cables. The composition is visualized by first placing the little WDs $\phi_i$ into their respective outer stars $Y_i$, and then removing the intermediary circles $Y_i$, while preserving the connections they made. This is captured mathematically by the pushout in (\ref{dia:untyped composition}) and is drawn below.
\begin{align}\label{dia:untyped composition drawing}
\parbox{\textwidth}{\includegraphics[width=\textwidth]{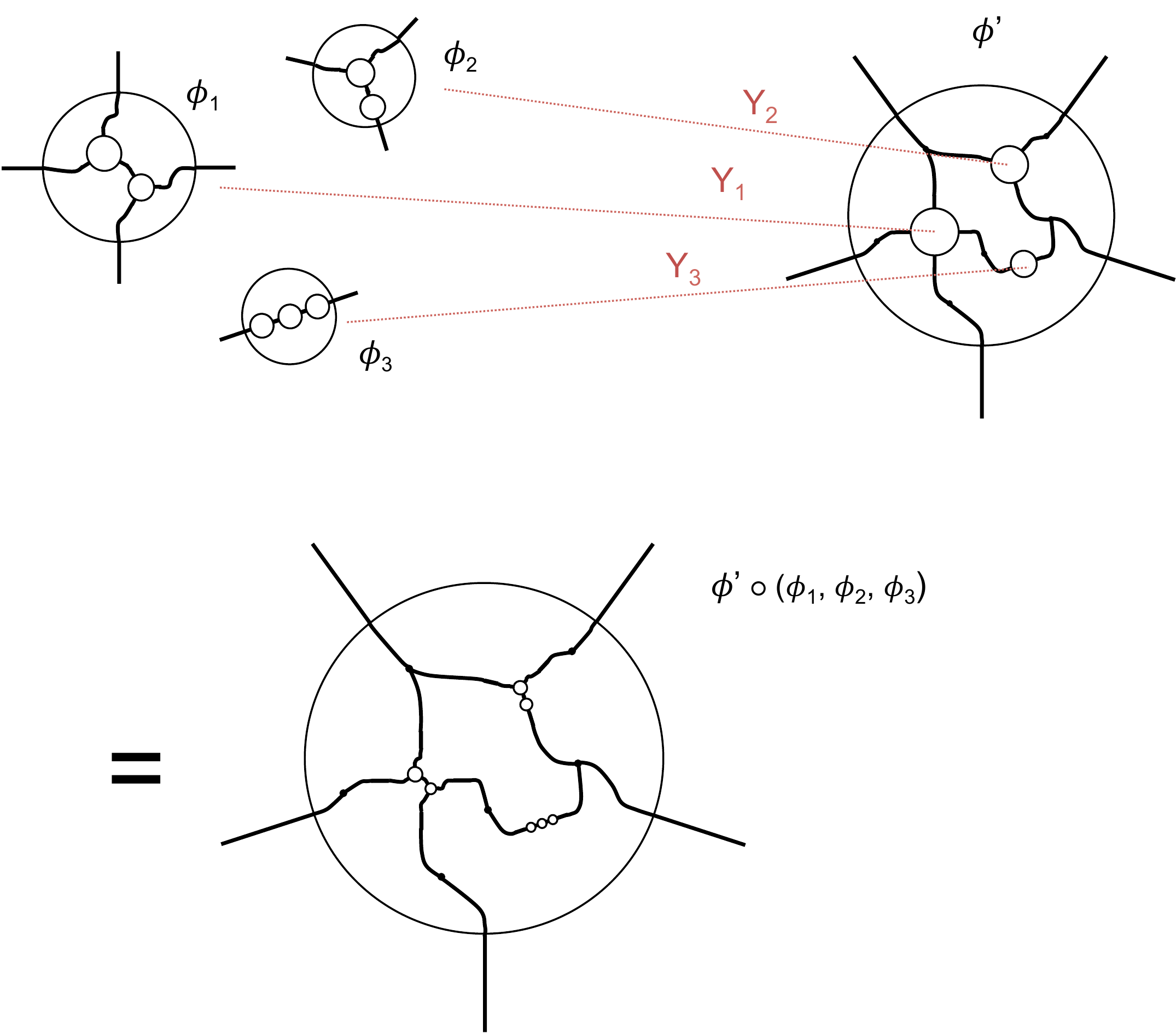}}
\end{align}
The identity and associativity law hold by the usual properties of pushouts.

\end{example}

The idea of this paper is to show several applications of the singly-typed wiring diagrams operad as well as a variant, the typed wiring diagrams operad, discussed in Section \ref{sec:typed wd operad}. First we must explain the notion of algebras on an operad.

\subsection{Operad functors and algebras on an operad}

An algebra on an operad is a special case of a functor between operads. 

\begin{definition}

Let $\mcO$ and $\mcO'$ be operads. An {\em operad functor (\tn{or simply} functor) $F$ from $\mcO$ to $\mcO'$}, denoted $F\taking\mcO\to\mcO'$, consists of some constituents (A. map on objects, B. map on morphisms) conforming to some laws (1. identity law, 2. composition law). The required constituents for $F$ are as follows:
\begin{enumerate}[\hsp A.]
\item There is a function $\Ob(F)\taking\Ob(\mcO)\to\Ob(\mcO')$.
\item For each object $y\in\Ob(\mcO)$, finite set $n\in\Ob(\Fin)$, and $n$-indexed set of objects $x\taking n\to\Ob(\mcO)$, there is a function $$F_n\taking\mcO_n(x;y)\to\mcO'_n(Fx;Fy).$$
\end{enumerate}
As in B. above, we often denote $\Ob(F)$, and also each $F_n$, simply by $F$. The laws that govern these constituents are as follows:
\begin{enumerate}[\hsp 1.]
\item For each object $x\in\Ob(\mcO)$, the equation $F(\id_x)=\id_{Fx}$ holds.
\item Let $s\taking m\to n$ be a morphism in $\Fin$. Let $z\in\Ob(\mcO)$ be an object, let $y\taking n\to\Ob(\mcO)$ be an $n$-indexed set of objects, and let $x\taking m\to\Ob(\mcO)$ be an $m$-indexed set of objects. Then, with notation as in Definition \ref{def:operad}, the following diagram of sets commutes:
\begin{align}\label{dia:operad functor on composition}
\xymatrix{
\mcO_n(y;z)\times\prod_{i\in n}\mcO_{m_i}(x_i;y(i))\ar[r]^-F\ar[d]_\circ&
\mcO'_n(Fy;Fz)\times\prod_{i\in n}\mcO'_{m_i}(Fx_i;Fy(i))\ar[d]^\circ\\
\mcO_m(x;z)\ar[r]_F&\mcO'_m(Fx;Fz)
}
\end{align}
\end{enumerate}

We denote the category of operads and operad functors by $\Oprd$.

\end{definition}

\begin{warning}

What we call operad functors are usually called {\em operad morphisms} or {\em morphisms of operads} (see \cite[Chapter 2]{Lei}). We found that the notion of morphisms {\em in an operad} and that of morphisms {\em of operads} were too easily confused using this terminology. The confusion seems analogous to what would arise if we dismissed the term ``functor" altogether from category theory, relying on the term ``category morphism" or ``morphism of categories" to take over and fulfill its role.
 
 \end{warning}
 
\begin{example}\label{ex:operads s}

Suppose that $\bfS$ and $\bfT$ are categories with all finite products. By Example \ref{ex:Ss} we have operads $\bfSs$ and $\bfT{\bf s}$. If $f\taking\bfS\to\bfT$ is a functor that preserves finite products then there is an induced operad functor $F\taking\bfSs\to\bfTs$. Indeed, on objects we can put $F=f$ since $\Ob(\bfSs)=\Ob(\bfS)$ and $\Ob(\bfTs)=\Ob(\bfT)$. Every morphism $\phi\taking (x_1,\ldots,x_n;y)$ in $\bfS$ is given by a morphism $\phi\taking x_1\times\cdots\times x_n\to y$ in $\bfS$, so we put 
$$F(\phi)\taking F(x_1)\times\cdots\times F(x_n)\iso F(x_1\times\cdots\times x_n)\Too{f(\phi)}F(y).$$

\end{example}

\begin{lemma}\label{lemma:terminal operad}

Let $\mcO$ be an operad, and let $\mcE$ be the commutative operad from Example \ref{ex:commutative operad}. Then there is a unique operad functor $\mcO\to\mcE$. In other words, $\mcE$ is a terminal object in $\Oprd$.

\end{lemma}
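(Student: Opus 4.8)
The plan is to exploit the fact that both the object collection and all the morphism sets of $\mcE$ are singletons, so that every choice in the data of an operad functor $F\taking\mcO\to\mcE$ is forced. This gives uniqueness essentially for free, and then I only need to confirm that the one candidate actually satisfies the operad functor laws.

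First I would establish uniqueness. On objects, since $\Ob(\mcE)=\{\monOb\}$ is a one-element collection, the function $\Ob(F)\taking\Ob(\mcO)\to\Ob(\mcE)$ must send every object to $\monOb$; there is no other option. On morphisms, for any $y\in\Ob(\mcO)$, any $n$, and any $x\taking n\to\Ob(\mcO)$, the target set is $\mcE_n(Fx;Fy)=\mcE_n(\monOb,\ldots,\monOb;\monOb)=\{\mu_n\}$, a one-element set. Hence the function $F_n\taking\mcO_n(x;y)\to\{\mu_n\}$ is forced to be the unique (constant) map. Since both constituents A and B are completely determined, there is at most one operad functor $\mcO\to\mcE$.

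Next I would verify existence, i.e.\ that this forced assignment $F$ does satisfy the two laws. For the identity law, I need $F(\id_x)=\id_{Fx}$; the left side is the unique element $\mu_1\in\mcE_1(\monOb;\monOb)$, and the right side is $\id_\monOb$, which by definition is the unique element of that same singleton $\mcE_1(\monOb;\monOb)=\{\mu_1\}$, so the two agree. For the composition law, I need the square (\ref{dia:operad functor on composition}) to commute; but both ways around the square land in $\mcE_m(Fx;Fz)=\{\mu_m\}$, a one-element set, and any two functions into a one-element set are equal, so the square commutes automatically.

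There is no genuine obstacle here: the content of the lemma is precisely that a terminal object in $\Oprd$ is obtained by collapsing all objects and all morphisms to single points, and the commutative operad $\mcE$ is designed to be exactly this collapse. The only thing one must take care to check — and the closest thing to a subtlety — is that the identity-preservation condition really does hold, which comes down to the observation that in $\mcE$ the designated identity $\id_\monOb$ coincides with the unique unary morphism $\mu_1$; everything else follows from the singleton property of the codomains.
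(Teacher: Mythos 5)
Your proposal is correct and follows exactly the argument the paper gives (the paper's proof is the one-line observation that $\mcE$ has one object and one morphism in each $\Hom$-set, so everything is forced); you have simply spelled out the details, including the worthwhile check that $\id_\monOb=\mu_1$ makes identity-preservation hold.
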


\begin{proof}
The operad $\mcE$ has one object and one morphism in each $\Hom$-set, so the proof is straightforward.
\end{proof}

\begin{definition}\label{def:algebras on operad}

Let $\mcO$ be an operad. An {\em algebra on $\mcO$} is an operad functor $R\taking\mcO\to\Sets$. 

Given two algebras $R,S\taking\mcO\to\Sets$, an {\em algebra morphism $p$ from $R$ to $S$}, denoted $p\taking R\to S$, consists of one constituent (A. components) and one law (1. naturality) as follows.
\begin{enumerate}[\hsp A.]
\item For each object $x\in\Ob(\mcO)$ there is a function $p_x\taking R(x)\to S(x)$.
\end{enumerate}
The law that governs the components is as follows:
\begin{enumerate}[\hsp 1.]
\item Suppose given an object $y\in\Ob(\mcO)$, a finite set $n\in\Ob(\Fin)$, an $n$-indexed set of objects $x\taking n\to\Ob(\mcO)$, and a morphism $\phi\taking x\to y$. Then the following diagram of sets (the so-called {\em naturality square}) commutes:
$$
\xymatrix{
\prod_{i\in n}R(x_i)\ar[rr]^-{\prod_{i\in n}p_{x_i}}\ar[d]_{R(\phi)}&&\prod_{i\in n}S(x_i)\ar[d]^{S(\phi)}\\
R(y)\ar[rr]_-{p_y}&&S(y).
}
$$
\end{enumerate}

The algebras on $\mcO$ and their morphisms form a category, which we denote by $\mcO\sets$.

\end{definition}

\begin{remark}

Definition \ref{def:algebras on operad} defines algebra morphisms, but it could easily be modified to define operad natural transformations between arbitrary operad functors $F,G\taking\mcO\to\mcO'$. Operad natural transformations $\alpha\taking F\to G$ are defined in the usual way: a component $\alpha_x$ for each object $x\in\Ob(\mcO)$ and naturality squares for each morphism in $\mcO$. We will not explicitly give the definition in this paper, but it straightforwardly generalizes Definition \ref{def:algebras on operad}, in which $\mcO'=\Sets$.

\end{remark}

\begin{example}\label{ex:commutative monoids}

Let $\mcE$ be the commutative operad of Example \ref{ex:commutative operad}. An $\mcE$-algebra $S\taking\mcE\to\Sets$ consists of a set $M\in\Ob(\Set)$, and for each natural number $n\in\NN$ a morphism $\mu_n\taking M^n\to M$. It is not hard to see that, together the morphism $\mu_2\taking M\times M\to M$ and the element $\mu_0\taking\singleton\to M$ give $M$ the structure of a commutative monoid. Indeed, the associativity and unit axioms are encoded in the axioms for operads and their morphisms. The commutativity of multiplication arises by applying the commutative diagram (\ref{dia:operad functor on composition}) to the symmetric group action as discussed in Remark \ref{rem:symmetry}.

\end{example}

\begin{lemma}

Let $\mcO$ be an operad. The category $\mcO\sets$ has all limits and colimits. In particular the initial object sends each $x\in\Ob(\mcO)$ to the empty set, and a terminal object (denoted $\singleton^\mcO\taking\mcO\to\Sets$) sends each $x\in\Ob(\mcO)$ to a singleton set.

\end{lemma}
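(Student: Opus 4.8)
The plan is to treat $\mcO\sets$ much as one treats a diagram category $\Fun(\mcC,\Set)$: \emph{limits} should be computable objectwise, whereas \emph{colimits} will require more care, because the operad structure maps involve products and products do not commute with colimits. Accordingly I would prove completeness directly and obtain cocompleteness by exhibiting $\mcO\sets$ as monadic over the product category $\Set^{\Ob(\mcO)}$.

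For limits, let $D\taking J\to\mcO\sets$ be a diagram indexed by a small category $J$, and set $L(x):=\lim_{j\in J}D(j)(x)$ in $\Set$ for each $x\in\Ob(\mcO)$. Given a morphism $\phi\taking(x_1,\ldots,x_n)\to y$, the maps $D(j)(\phi)\taking\prod_i D(j)(x_i)\to D(j)(y)$ are natural in $j$ (this is precisely the naturality square for each algebra morphism $D(\alpha)$, $\alpha\taking j\to j'$). Since a finite product is a limit over a discrete category and limits commute with limits, there is a canonical iso $\prod_i L(x_i)\iso\lim_j\prod_i D(j)(x_i)$; composing it with the map $\lim_j\prod_i D(j)(x_i)\to\lim_j D(j)(y)=L(y)$ induced by the $D(j)(\phi)$ defines $L(\phi)$. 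The identity and associativity laws for $L$ then hold because they hold for each $D(j)$ and may be verified after projecting to $D(j)(y)$, using the universal property of the limit; the same universal property shows the projections $L\to D(j)$ are algebra morphisms and that $L$, so equipped, is the limit. Taking $J=\varnothing$ recovers the terminal object: here $L(x)=\singleton$ for all $x$ with its forced structure maps, which is exactly $\singleton^\mcO$.

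The main obstacle is colimits, where the objectwise recipe fails. For a coproduct one would want $(R\amalg S)(x)=R(x)\amalg S(x)$, but the structure map $\prod_i\big(R(x_i)\amalg S(x_i)\big)\to R(y)\amalg S(y)$ has no natural definition, since the left-hand product distributes into ``mixed'' summands drawing some factors from $R$ and some from $S$. My plan is instead to build a left adjoint $\mcF\taking\Set^{\Ob(\mcO)}\to\mcO\sets$ to the forgetful functor $U$, namely the free $\mcO$-algebra on a family of generators, assembled from the operations of $\mcO$. This realizes $\mcO\sets$ as the category of algebras for the monad $T=U\mcF$ on $\Set^{\Ob(\mcO)}$, and $T$ is finitary because every operation of $\mcO$ has finite arity. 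Completeness of an Eilenberg--Moore category is automatic (its forgetful functor creates limits, recovering the objectwise construction above), and cocompleteness follows from the standard fact that the category of algebras of a finitary monad on a cocomplete, locally presentable base is again cocomplete; alternatively one constructs coproducts and coequalizers by hand as quotients of free algebras. I expect the construction of $\mcF$ — equivalently, checking the hypotheses needed to invoke the monad-theoretic result — to be the technical heart of the argument.

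For the initial object I would take the free algebra $\mcF(\varnothing)$ on the empty family of generators. When $\mcO$ has no nullary morphisms this is precisely the algebra $x\mapsto\varnothing$: the empty-set family underlies an algebra (for $\phi$ with $n\ge 1$ the product $\prod_i\varnothing=\varnothing$, and the unique map $\varnothing\to\varnothing$ serves as $R(\phi)$), and it is initial because for any algebra $S$ the only family of maps $\varnothing\to S(x)$ is automatically an algebra morphism. I would flag one caveat matching the statement: if $\mcO$ carries nullary operations then $\prod_{i\in\varnothing}R(x_i)=\singleton$ forces constants into each $R(y)$, so the genuine initial object is the free algebra on those constants rather than the empty-everywhere family; the description in the statement is the constant-free case.
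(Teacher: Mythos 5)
Your proposal is correct, and it diverges from the paper's argument in an instructive way: the paper's entire proof is the one-line assertion that limits \emph{and colimits} of $\mcO$-algebras are formed object-wise, whereas you correctly observe that only the limit half of this is true. Your ``mixed summands'' objection to object-wise coproducts is exactly right, and it can be seen concretely inside the paper itself: by Example~\ref{ex:commutative monoids}, $\mcE$-algebras are commutative monoids, and the coproduct of two commutative monoids is not the disjoint union of their underlying sets (the disjoint union does not even carry an algebra structure, precisely because of the mixed tuples). So your monadicity detour --- exhibiting $\mcO\sets$ as the Eilenberg--Moore category of a finitary monad on $\Set^{\Ob(\mcO)}$, so that limits are created by the forgetful functor (recovering the object-wise recipe) while cocompleteness follows from local presentability or from a by-hand construction of colimits as quotients of free algebras --- is not an over-complication; it is what an honest proof requires, at the cost of actually constructing the free algebra functor, which you rightly flag as the technical heart.

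Your closing caveat is moreover a genuine correction to the \emph{statement}, not merely to the proof. The paper's own main operad $\mcS$ has nullary morphisms, namely cospans $\emptyset\to C\From{g}Y$, and these are used essentially in Proposition~\ref{prop:no alg hom Rel to N}; for any such $\phi\in\Hom_\mcS(;Y)$, an algebra $R$ must supply a map $R(\phi)\taking\singleton\to R(Y)$, so no $\mcS$-algebra has $R(Y)=\emptyset$ for any $Y$. Hence the initial object of $\mcS\sets$ is the free algebra on no generators (constants modulo the relations imposed by composition), not the empty-everywhere functor; the ``empty set'' description in the Lemma is valid only for operads with no nullary morphisms, exactly as you say.
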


\begin{proof}

Limits and colimits of $\mcO$-algebras are formed object-wise on $\mcO$.

\end{proof}

The following lemma allows us to transfer algebras on one operad to algebras on another.

\begin{lemma}\label{lemma:pullback of algebras}

Suppose that $F\taking\mcO\to\mcO'$ is an operad functor. It induces a functor $F^*\taking\mcO'\sets\to\mcO\sets$, sending an $\mcO'$-algebra $i\taking\mcO'\to\Sets$ to its composition with $F$. This functor has a left adjoint,
$$\xymatrix{F_!\taking\mcO\sets\ar@<.5ex>[r]&\mcO'\sets:\hspace{-2.5pt}F^*\ar@<.5ex>[l]}$$

\end{lemma}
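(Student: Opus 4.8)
The plan is to realize $F^*$ as a restriction functor and to produce its left adjoint $F_!$ as an operadic left Kan extension along $F$, exploiting the fact that $\mcO$-algebras are monadic over indexed families of sets. Write $\mcC=\Ob(\mcO)$ and $\mcC'=\Ob(\mcO')$, and let $\Set^\mcC$ denote the category of $\mcC$-indexed families of sets. There is an ``underlying family'' functor $U_\mcO\taking\mcO\sets\to\Set^\mcC$ sending $R$ to $(R(x))_{x\in\mcC}$, and the standard operadic bookkeeping (freely applying the operations of $\mcO$ and then symmetrizing, as in the composition formula) produces its left adjoint, the free-algebra functor $L_\mcO\taking\Set^\mcC\to\mcO\sets$; likewise for $\mcO'$. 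Let $f:=\Ob(F)\taking\mcC\to\mcC'$, with restriction $f^*\taking\Set^{\mcC'}\to\Set^\mcC$, $(f^*T)_x=T_{f(x)}$, and its left adjoint $f_!$ given by coproducts over fibers, $(f_!S)_{x'}=\coprod_{f(x)=x'}S_x$.

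The first key observation is the compatibility $U_\mcO\circ F^*=f^*\circ U_{\mcO'}$, which is immediate from $(F^*i)(x)=i(Fx)$. I would then define $F_!$ on free algebras by $F_!(L_\mcO S):=L_{\mcO'}(f_!S)$ and verify the adjunction there through the chain of natural bijections
\begin{align*}
\Hom_{\mcO'\sets}(L_{\mcO'}f_!S,\,i)&\iso\Hom_{\Set^{\mcC'}}(f_!S,\,U_{\mcO'}i)\iso\Hom_{\Set^\mcC}(S,\,f^*U_{\mcO'}i)\\
&=\Hom_{\Set^\mcC}(S,\,U_\mcO F^*i)\iso\Hom_{\mcO\sets}(L_\mcO S,\,F^*i),
\end{align*}
which uses, in order, the two free--forgetful adjunctions, the adjunction $f_!\dashv f^*$, and the compatibility above. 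By the Yoneda lemma this natural isomorphism forces the value of $F_!$ on morphisms between free algebras, making $F_!$ a well-defined functor on the full subcategory of free $\mcO$-algebras that is left adjoint to $F^*$ there.

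To extend $F_!$ to all of $\mcO\sets$, I would use that every algebra $R$ is the reflexive coequalizer of its canonical bar presentation $L_\mcO U_\mcO L_\mcO U_\mcO R\tto L_\mcO U_\mcO R\to R$ by free algebras. Since $\mcO'\sets$ is cocomplete by the previous lemma, I define $F_!R$ to be the coequalizer of $F_!$ applied to this parallel pair, and I verify the adjunction by turning colimits into limits on $\Hom$: for any $\mcO'$-algebra $i$, the set $\Hom_{\mcO'\sets}(F_!R,i)$ is the equalizer of $\Hom_{\mcO'\sets}(F_!L_\mcO U_\mcO R,i)\tto\Hom_{\mcO'\sets}(F_!L_\mcO U_\mcO L_\mcO U_\mcO R,i)$, which by the free-algebra adjunction above is naturally the equalizer of $\Hom_{\mcO\sets}(L_\mcO U_\mcO R,F^*i)\tto\Hom_{\mcO\sets}(L_\mcO U_\mcO L_\mcO U_\mcO R,F^*i)$, and this equalizer is exactly $\Hom_{\mcO\sets}(R,F^*i)$ because the same coequalizer presents $R$.

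The step I expect to be the main obstacle is making $F_!$ honestly functorial and presentation-independent: the free-algebra functor is not full, so the assignment $L_\mcO S\mapsto L_{\mcO'}f_!S$ must be promoted to a functor through the represented-functor argument above rather than naively, and one must check that the coequalizer defining $F_!R$ is natural in $R$ --- this is the usual ``adjoint lifting'' bookkeeping. A shorter but less explicit alternative bypasses the construction entirely: $F^*$ preserves all limits (they are computed object-wise by the previous lemma), and since $\mcO\sets$ is locally presentable and $F^*$ is accessible, the adjoint functor theorem yields $F_!$ at once; there the only genuine work is the solution-set bound showing that the $\mcO'$-subalgebra generated by the image of a fixed $R$ is bounded in cardinality.
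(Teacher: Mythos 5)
Your proposal is correct, but it takes a genuinely different route from the paper. The paper in fact does not prove the lemma at all: its ``sketch of proof'' identifies $F_!$ as an operadic left Kan extension, says the construction proceeds pointwise, analogously to ordinary left Kan extensions in basic category theory (concretely, $F_!S(X')$ would be a colimit built from tuples consisting of objects $X_1,\ldots,X_n$ of $\mcO$, a morphism $(FX_1,\ldots,FX_n)\to X'$ in $\mcO'$, and elements $s_i\in S(X_i)$, modulo the identifications induced by morphisms of $\mcO$), and then explicitly leaves existence unproven because the left adjoint is never used later in the paper. Your argument instead reduces everything to monad theory: the identity $U_\mcO\circ F^*=f^*\circ U_{\mcO'}$, the elementary adjunction $f_!\dashv f^*$ on indexed families, the resulting partial left adjoint on free algebras, and the extension to arbitrary algebras via the canonical bar presentation (or, alternatively, the adjoint functor theorem). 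Both routes are valid; the paper's pointwise formula is more explicit, exhibiting $F_!S$ directly at the cost of element-level verification that the equivalence relation and the induced algebra structure are well defined, whereas yours is more structural, trading that for reliance on cocompleteness of $\mcO'\sets$ (the paper's preceding lemma---whose ``object-wise'' justification is really only valid for sifted colimits such as your reflexive coequalizer, not for all colimits, so your use of it is actually on safer ground than the lemma's own proof) and on standard monadicity facts for symmetric colored operads. Two minor corrections: in the adjoint-functor-theorem variant, the category that must be locally presentable is $\mcO'\sets$, the domain of $F^*$ (both are, so no harm done); and the ``adjoint lifting bookkeeping'' you flag as the main obstacle is actually free of charge---once you know that for every $R$ the functor $\Hom_{\mcO\sets}(R,F^*(-))$ is representable, represented by your coequalizer $F_!R$, the functoriality of $F_!$ and the naturality of the adjunction follow automatically from the universal-arrow characterization of adjunctions, with no further checks needed.
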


\begin{proof}[Sketch of proof]
The functor $F^*\taking\mcO'\sets\to\mcO\sets$ is clear. The existence of a left adjoint (also known as a left Kan extension) is well-known, but hard to find in the literature. (See \cite{Lur} for a $\infty$-categorical version.) The construction of the operadic left Kan extension $F_!$ proceeds analogously to the construction of left Kan extensions in basic category theory. Because we will not use the left adjoint in the current paper, we leave its existence unproven. 

\comment{

We sketch the construction here.

Let $S\taking\mcO\to\Sets$ be an $\mcO$-algebra. For each object $X'\in\Ob(\mcO')$, define $F_!(S)(X')$ as the appropriate colimit, which we describe roughly as follows. The set $F_!(S)(X')$ consists of an element for every occurance of the following data:
\begin{itemize}
\item a sequence of objects $X_1,\ldots,X_n\in\Ob(\mcO)$, 
\item a morphism $(F(X_1),\ldots,F(X_n))\to X'$ in $\mcO'$, and 
\item a sequence of elements $s_1\in S(X_1),\ldots,s_n\in S(X_n)$,
\end{itemize}
{\em modulo} the obvious equivalence relation $\sim$ that arises when morphisms in $\mcO$ carry one occurance of such data to another occurrance. It is clear how to deal with a morphism $X'_1,\ldots,X_n'\to Y'$ in $\mcO'$ and that $F_!$ is functorial in $S$.

It is straightforward to see that our constructed $F_!$ is left adjoint to $F^*$. Indeed, given $S\in\mcO\sets$ and $T\in\mcO'\sets$, here is a brief description of the adjunction isomorphism 
$$\alpha_{S,T}\taking\Hom_{\mcO\sets}(S,F^*T)\to\Hom_{\mcO'\sets}(F_!S,T).$$
Let $g\taking S\to F^*T$ be a morphism of $\mcO\sets$, let $X'\in\Ob(\mcO')$ be an object, and suppose we have an element of $F_!S(X')$ represented by the data $X_1,\ldots,X_n\in\Ob(\mcO), \phi\taking(FX_1,\ldots,FX_n)\to X', s_1\in S(X_1),\ldots,s_n\in S(X_n)$. Then we have $g(s_1)\in T(FX_1),\ldots g(s_n)\in T(FX_n)$, giving an element $T(\phi)(g(s_1),\ldots,g(s_n))\in T(X')$, as desired. It is easy to check that this is well-defined modulo $\sim$ and that it is natural in $X'$. Its inverse, $\beta_{S,T}\taking\Hom_{\mcO'\sets}(F_!S,T)\to\Hom_{\mcO\sets}(S,F^*T)$ is perhaps simpler. Given $h\taking F_!S\to T$ in $\mcO'\sets$, and $s\in S(X)$ for some $X\in\Ob(\mcO)$, we obtain our desired element of $F^*T(X)$ using the identity on $FX$, i.e. as $h(X,\id_{F_X},s)\in T(FX)$. It is not hard to show that $\alpha_{S,T}$ and $\beta_{S,T}$ are mutually inverse.

}

\end{proof}

In this paper we are most concerned with wiring diagrams, and Example \ref{ex:singly typed relational alg} discusses the algebra of relations on the singly-typed wiring diagrams operad $\mcS$.

\begin{example}[{\bf Main Algebra Example}]\label{ex:singly typed relational alg}

Let $\mcS$ be the singly-typed wiring diagram operad discussed in Example \ref{ex:singly-typed wd}, and fix a set $A\in\Ob(\Set)$. In this example we associate to $A$ an $\mcS$-algebra $\Rel_A$. We call it {\em the relational $\mcS$-algebra of type $A$}.

To specify this $\mcS$-algebra, i.e. this operad functor $\Rel_A\taking\mcS\to\Sets$, we need to specify it on objects and on morphisms. Let $X\in\Ob(\mcS)=\Ob(\Fin)$ be a singly-typed star; we assign 
\begin{align}\label{dia:rel definition}\Rel_A(X):=\{R\ss A^X\}=\PP(A^X).\end{align}
That is, $\Rel_A(X)$ is the set of $|X|$-ary relations on $A$. For any set $A$, if $X=\emptyset$ then $\Rel_A(X)=\PP(\singleton)$ has two elements and can be considered to be boolean valued.

Now suppose that $\phi\taking X_1,\ldots,X_n\to Y$ is a singly-typed wiring diagram given by the cospan to the left below
$$\xymatrix{
&Y\ar[d]^g\\
X_1\amalg\cdots\amalg X_n\ar[r]^-f&C.
}
\hspace{.8in}
\xymatrix{
&A^Y\\
A^{X_1\amalg\cdots\amalg X_n}&A^C\ar[l]_-{A^f}\ar[u]_{A^g}
}
$$  
Note that it induces the diagram to the right above. We need to specify a function 
$$\Rel_A(\phi)\taking\Rel_A(X_1)\times\cdots\times\Rel_A(X_n)\to\Rel_A(Y).$$ 
So suppose given relations $R_1\ss A^{X_1},\ldots,R_n\ss A^{X_n}$. Then $\Rel_A(\phi)(R_1,\ldots,R_n)$ is the relation $S\ss A^Y$ defined by first forming the pullback $Q$ in the diagram below, 
\begin{align}\label{dia:composition formula for singly typed}
\xymatrix{
&A^Y\\
A^{X_1}\times\cdots\times A^{X_n}&A^C\ar[u]_{A^g}\ar[l]_-{A^f}&S\ar@{^(->}@/_1pc/[ul]\\
R_1\times\cdots\times R_n\ar@{^(->}[u]&Q\lrlimit\ar[l]\ar[u]\ar@{->>}@/_1pc/[ur]
}
\end{align}
and then taking as $S$ the image of $Q$ under the induced function $Q\to A^Y$. We refer to $\Rel_A(\phi)(R_1,\ldots,R_n)$ as the {\em $\phi$-conjunction} of relations $R_1,\ldots,R_n$.

Because epi-mono factorizations in $\Set$ are unique up to isomorphism, it is easy to show that, as specified, $\Rel_A$ respects the composition formula (\ref{dia:untyped composition}) in $\mcS$.

\end{example}

\begin{example}\label{ex:digital circuit}

In the context of digital circuits, it is well known that one can begin with NAND gates and put them together to build NOT gates, AND gates, OR gates, etc. A NAND gate is an element of $\Rel_{\{True,False\}}(\{\tn{A,B,out}\})$, namely the relation represented by the table
$$\parbox{1.5in}{
\begin{tabular}{| l | l | l |}
\bhline
\multicolumn{3}{|c|}{NAND}\\\bhline
{\bf A}&{\bf B}&{\bf out}\\\bbhline
True&True&False\\
True&False&True\\
False&True&True\\
False&False&True\\\bhline
\end{tabular}}
\hspace{.7in}
\parbox{1.5in}{
\includegraphics[width=1.5in]{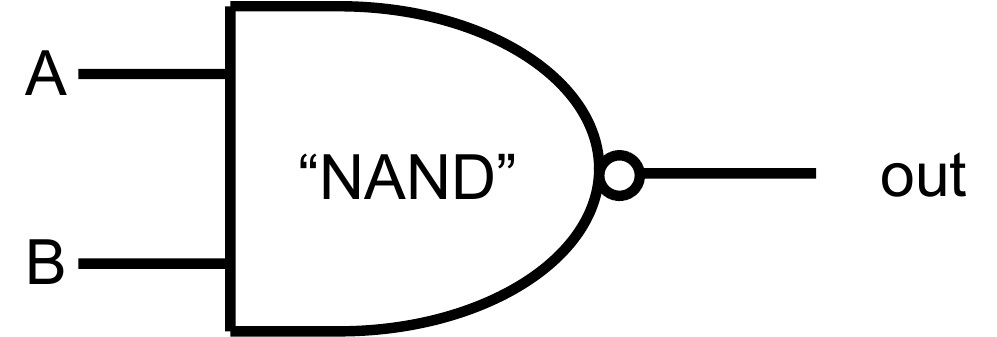}
}
$$
The wiring diagram $\phi$ given by the cospan $\{\tn{A,B,out}\}\To{f}\{\tn{in,out}\}\From{g}\{\tn{in,out}\}$, where $f(\tn{A})=f(\tn{B})=\tn{in}$, is depicted by the form of the diagram below. (The interior is ``grayed out" for aesthetic reasons, to invoke the idea of viewing NOT as a black box.)
\begin{center}
\includegraphics[width=2in]{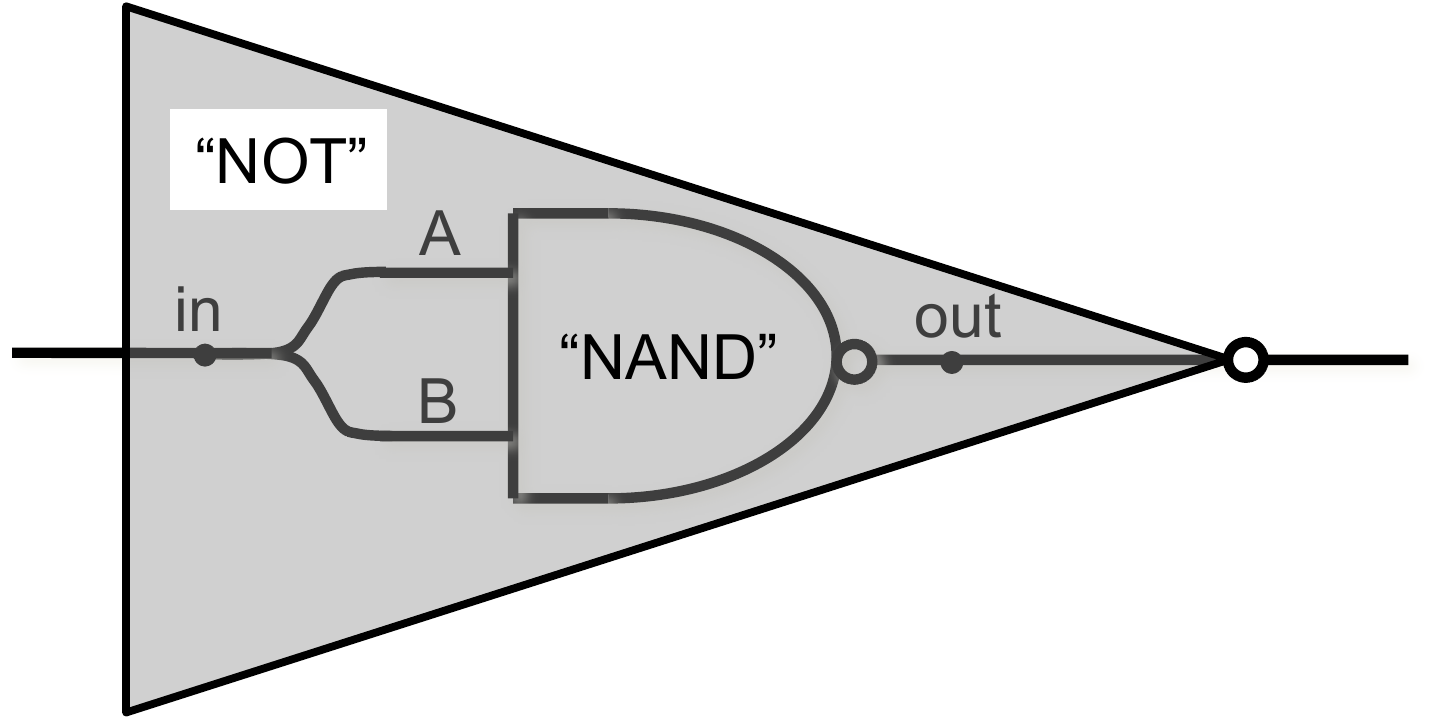}
\end{center}
Note however that this diagram depicts more than a morphism $\phi\taking\{\tn{A,B,out}\}\to \{\tn{in,out}\}$ in $\mcS$, but also its application to NAND. That is, it is a depiction of 
$$\tn{NOT}:=\Rel_{\{True,False\}}(\phi)(\tn{NAND}),$$
as an element of $\Rel_{\{True,False\}}(\{\tn{in,out}\})$. One can form relations AND, OR, etc., similarly, and then move on to so-called half adders, full adders, $n$-bit adders, etc. 

At each point the operad allows us to hide the inner implementation details if we choose, or use the composition formula to show the whole picture. We can also regroup different parts of the circuit, and the associative law of operads assures us that this will not change the result.

\end{example}

\begin{example}\label{ex:three queries revisited}

In this example we return to the three queries from the introduction; see (\ref{dia:3 queries}). The wiring diagrams for these all involve a three-wire star $\{X,Y,Z\}$, and the first diagram also involves a one-wire star $\{Z\}$. 
\begin{center}
\includegraphics[width=\textwidth]{threeQueries}
\end{center}
$$\tiny
\xymatrix{
&\{X,Y\}\ar[d]^{g_1}\\
\{X,Y,Z\}\amalg\{Z\}\ar[r]_{f_1}&\{X,Y,Z\}
}\hsp
\xymatrix{
&\{X,Z\}\ar[d]^{g_2}\\
\{X,Y,Z\}\ar[r]_{f_2}&\{X,Y,Z\}
}\hsp
\xymatrix{
&\{Z\}\ar[d]^{g_3}\\
\{X,Y,Z\}\ar[r]_{f_3}&\{XY,Z\}
}
$$
These wiring diagrams are actually morphisms $\phi_1, \phi_2,\phi_3$ represented by the three cospans above.
\footnote{In $\phi_3$ the set $C$ is supposed to have two elements (i.e. $XY$ is the name of one element), and $f_3(X)=f_3(Y)=XY$.}

For example, we consider $\phi_1\taking(B_1,B_2)\to C$, where $B_1=\{X,Y,Z\}, B_2=\{Z\}$, and $C=\{X,Y\}$. The relations $R_1\in\Rel_\NN(B_1)$ and $R_2\in\Rel_\NN(B_2)$ are defined by 
$$R_1=\{(X,Y,Z)\in\NN^3\|X*Y=Z\}\hsp\tn{and}\hsp R_2=\{Z\in\NN\|Z=9\}.$$
We apply Diagram (\ref{dia:composition formula for singly typed}) to our case
$$\xymatrix{
&\NN^{\{X,Y\}}\\
\NN^{\{X,Y,Z\}}\times\NN^{\{Z\}}&\NN^{\{X,Y,Z\}}\ar[u]\ar[l]\\
R_1\times R_2\ar[u]&P\lrlimit\ar[l]\ar[u]
}
$$
where $P=\{(X,Y,Z)\in\NN^3\|X*Y=Z\;\tn{ and }\;Z=9\}$. The $\phi_1$-conjunction of $R_1$ and $R_2$ is the image of the function $P\to\NN^{\{X,Y\}}$, namely
$$\Rel_\NN(\phi)(R_1,R_2)=\{(X,Y)\in\NN^2\|X*Y=9\},$$
as desired.

All of these $(\phi_1,\phi_2$, and $\phi_3$) can be thought of as conjunctive queries in which the domain is $\NN$; we will discuss this further in Section \ref{sec:databases}.

\end{example}

\begin{remark}\label{rem:strangeness}

The $\mcS$-algebra $\Rel_A$ has somewhat counter-intuitive behavior when it comes to inputting empty relations into a wiring diagram. Given a morphism $\phi\taking X_1,\ldots,X_n\to Y$ in $\mcS$, we can apply the function $\Rel_A(\phi)$ to any $n$-tuple 
$$((R_1\ss A^{X_1}),\ldots,(R_n\ss A^{X_n}))\in\Rel_A(X_1)\times\cdots\times\Rel_A(X_n)$$
 of relations and return a relation $(S\ss A^Y)\in\Rel_A(Y)$. The strangeness is that if any of the $R_i$ are empty, then the $\phi$-conjunction will also be empty, even if there is no connection between any wire of $X_i$ and a wire of $Y$. 

There are other algebras, related to $\Rel$, having different behavior with respect to empty relations. The most intuitive or useful one may be something other than $\Rel$, but we decided for clarity to focus on $\Rel$ because its construction (see Example \ref{ex:singly typed relational alg}) is as straightforward as possible.

\end{remark}

\comment{

Instead of working with $\Rel_A$ as an algebra, we can work with it as an operad in its own right using the following definition.

\begin{definition}

Let $\mcO$ be an operad. The {\em category of elements} functor, denoted  
$$\dispInt\taking\mcO\sets\to\Oprd/\mcO,$$
sends each algebra $R\taking\mcO\to\Sets$ to an operad $\int(R)\To{\pi_R}\mcO$ over $\mcO$ as follows. We define 
\begin{align*}
\Ob(\dispInt R)&:=\{(x,r)\|x\in\Ob(\mcO),r\in R(x)\}.\\
\Hom_{\int R}((x_1,r_1),\ldots,(x_n,r_n);(y,s))&:=\{\phi\in\mcO_n(x_1,\ldots,x_n;y)\|\phi(r_1,\ldots,r_n)=s\}.
\end{align*}
The operad functor $\pi_R$ sends $(x,r)$ to $x$ and is obvious on morphisms.

\end{definition}

}

\begin{application}

The relational $\mcS$-algebra might be useful in machine learning applications. A goal there could be phrased as follows. Suppose that $R\ss A^Y$ is a known $|Y|$-ary relation that we are trying to investigate. An {\em explanation} of the phenomenon $R$ could be cast as a morphism $\phi\taking X_1,\cdots,X_k\to Y$ and relations $P_1\ss A^{X_1}$ through $P_k\ss A^{X_k}$ such that $\mcS(\phi)(P_1,\ldots,P_k)=R$. We can search through the space of ways to ``wire up" known relations to produce the desired relation. 

A similar idea should be prevalent as a basic constituent of any design process. The object to be designed is understood as a relationship between its various aspects, and we want to produce it using existing objects, each of which is also a relationship between its various aspects (see \cite{HK}). In other words, the outer relation must be manifest as a wiring diagram of known relations.

\end{application}

\begin{application}

The singly-typed wiring diagram operad $\mcS$ and its algebras can easily work with time-dependent phenomena. For example, suppose we take $A=C^\infty(\RR)$, the set of smooth functions from $\RR\to\RR$; we speak as though the independent variable represents time. Relations on $A$ can model such phenomena as 
\begin{itemize}
\item time delay,
\item time-wise addition, subtraction, trigonometric functions, etc., and
\item differential equations and differential relations.
\end{itemize}
For example we can represent a 4-second time delay using a relation $\tn{Delay}_4\ss A^2$ where 
$$\tn{Delay}_4=\{(f,g)\in A\times A\|\forall t\in\RR,\ \ f(t)=g(t+4)\}.$$
Time-wise addition can be given by the relation $\{(f,g,h)\in C^\infty(\RR)\|\forall t\in\RR, f(t)+g(t)=h(t)\}$. Similarly, differential equations or differential relations such as $\{(f,g)\in C^\infty(\RR)\|\forall t\in\RR, (f'(t),g'(t))\neq (0,0)\}$.

These are just standard relations. The idea here is that wiring diagrams are a nice formalism for putting them together. To consider functions of more than one variable, one would simply transfer the above ideas to the typed WD operad discussed in Section \ref{sec:typed wd operad}.

\end{application}

\section{Brief analysis of relational $\mcS$-algebras}\label{sec:brief analysis}\setcounter{subsection}{1}\setcounter{subsubsection}{0}

To study the relational $\mcS$-algebra $\Rel_A$, it would be interesting to find algebraic invariants, i.e. algebra morphisms $\Rel_A\to X$ for some $\mcS$-algebras $X$. In other words, suppose we could associate say a complexity level, e.g. a number, to each relation in a database in such a way that the complexity levels add when relations are wired together. This would provide a useful invariant by which we would know, say, the run-time of a query before running it. Unfortunately, it seems that there are very few interesting $\mcS$-algebra morphisms {\em out of $\Rel$}. Example \ref{ex:algebras on U} and Proposition \ref{prop:no alg hom Rel to N} explain that there are no interesting additive invariants (``additive" in the sense of the operation of commutative monoids) of $\Rel$.

\begin{example}\label{ex:algebras on U}

By Example \ref{ex:commutative monoids}, there is an operad $\mcE$ such that the category of commutative monoids is equivalent to the category of $\mcE$-algebras. By Lemmas \ref{lemma:pullback of algebras} and \ref{lemma:terminal operad}, we can import any $\mcE$-algebra to an $\mcS$-algebra by pulling back along the unique functor $t\taking\mcS\to\mcE$. However, the resulting $\mcS$-algebra will not sensitive to the precise wiring pattern but only to the number $n$ of inner stars in a wiring diagram $\phi\taking x_1,\ldots,x_n\to y$. Regardless, if $\CMon$ is the category of commutative monoids, we have a functor 
\begin{align}\label{dia:cmon to s-alg}
t^*\taking\CMon\to\mcS\sets.
\end{align}

What would it mean to give an algebra morphism $\Rel_A\to t^*(\tnN)$, say if $\tnN=(\NN,0,+)$? In terms of databases (see Example \ref{ex:three queries revisited} and Section \ref{sec:databases}) it would mean the following. We would have to assign to every relation a natural number in such a way that when a relation $Y$ is written as a conjunctive query on relations $X_1,\ldots,X_n$, then the number assigned to $Y$ equals the sum of the numbers assigned to the $X_i$. One way of doing this would be to assign $0$ to every relation, but this is uninteresting. Proposition \ref{prop:no alg hom Rel to N} shows that there generally are no other algebra homomorphisms $\Rel_A\to\tnN$.

\end{example}

\begin{proposition}\label{prop:no alg hom Rel to N}

Let $A$ be a set with cardinality $|A|\geq 2$, and $\Rel_A$ the $A$-typed relational $\mcS$-algebra. Let $(N,0,+)$ denote any commutative monoid and let $\mcN=t^*(\tnN,0,+)$ denote the $\mcS$-algebra that assigns the set $N$ to every singly-typed star $X\in\Ob(\mcS)$ and summation to morphisms $\phi$ in $\mcS$, as above. The only algebra morphism $p\taking\Rel_A\to\mcN$ is trivial.

\end{proposition}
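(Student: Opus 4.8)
The plan is to exploit the naturality square together with the fact that $\mcN(\phi)$ is \emph{summation}, and to evaluate the hypothetical invariant $p$ on a few carefully chosen wiring diagrams whose outputs are forced to be either a full relation $A^Y$ or an empty relation. Concretely, since $\mcN=t^*(\tnN)$, the naturality law of Definition \ref{def:algebras on operad} says that for any $\phi\taking(X_1,\ldots,X_n)\to Y$ and relations $R_i\ss A^{X_i}$,
\[
p_Y\big(\Rel_A(\phi)(R_1,\ldots,R_n)\big)=p_{X_1}(R_1)+\cdots+p_{X_n}(R_n).
\]
First I would record the baseline fact that the full relation always has invariant $0$. Applying this to the nullary morphism $\phi\taking()\to Y$ with cospan $\emptyset\To{f}Y\From{\id}Y$ (so $C=Y$ and $g=\id_Y$), the construction in (\ref{dia:composition formula for singly typed}) gives $\Rel_A(\phi)()=A^Y$, while the right-hand side is an empty sum; hence $p_Y(A^Y)=0$ for every star $Y$.

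The heart of the plan is a single unary diagram that disposes of \emph{all} nonempty relations at once. Fix a star $X$ and a nonempty $R\ss A^X$, and take $\phi\taking(X)\to\emptyset$ with cospan $X\To{\id}X\From{g}\emptyset$, where $g$ is the empty function (so $C=X$, $f=\id_X$). Tracing (\ref{dia:composition formula for singly typed}), the pullback is $Q=R$ and the output is $\im(Q\to A^\emptyset)$, which equals the full relation $A^\emptyset=\singleton$ precisely because $R\neq\emptyset$. As $\phi$ is unary, $\mcN(\phi)$ is the identity, so naturality combined with the previous step gives $p_X(R)=p_\emptyset(A^\emptyset)=0$. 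Thus $p$ vanishes on every nonempty relation.

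It remains to treat empty relations, and this is exactly where the hypothesis $|A|\geq 2$ is used. For a nonempty star $X$ I would use the intersection diagram $\phi\taking(X,X)\to X$ with cospan $X\amalg X\To{f}X\From{\id}X$, where $f$ is the identity on each copy; then (\ref{dia:composition formula for singly typed}) computes $\Rel_A(\phi)(R_1,R_2)=R_1\cap R_2$, so that $p_X(R_1\cap R_2)=p_X(R_1)+p_X(R_2)$. Since $|A|\geq 2$ and $X\neq\emptyset$, there are distinct functions $s\neq s'$ in $A^X$; taking $R_1=\{s\}$ and $R_2=\{s'\}$ (both nonempty, with $R_1\cap R_2=\emptyset$) yields $p_X(\emptyset)=p_X(\{s\})+p_X(\{s'\})=0$ by the previous paragraph. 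The one leftover case $X=\emptyset$ follows by feeding the empty relation on $\singleton$ through any unary $\phi\taking(\singleton)\to\emptyset$: by Remark \ref{rem:strangeness} the output is again empty, so naturality gives $p_\emptyset(\emptyset)=p_\singleton(\emptyset)=0$. Assembling these cases, $p_X(R)=0$ for every $X$ and every $R\ss A^X$, i.e. $p$ is trivial.

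The main obstacle I anticipate is not monoid-theoretic but bookkeeping the image/pullback recipe (\ref{dia:composition formula for singly typed}) in the two degenerate regimes: projecting a relation all the way down to the empty outer star (where ``$\exists$'' collapses a nonempty relation to \emph{true}), and the counterintuitive behavior of empty inputs flagged in Remark \ref{rem:strangeness}. Once these outputs are correctly identified as $A^\emptyset$ or $\emptyset$, the commutative-monoid arithmetic is immediate, and the only genuine input is the decomposition of the empty relation as a disjoint intersection of two nonempty singletons, which is available precisely because $|A|\geq 2$.
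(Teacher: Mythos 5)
Your proof is correct and takes essentially the same approach as the paper's: show $p$ vanishes on complete relations via the nullary cospan $\emptyset\to Y\From{\id_Y}Y$, then on all nonempty relations via a unary ``discard'' diagram, then on empty relations by intersecting two disjoint nonempty singletons (the only place $|A|\geq 2$ enters), finishing with a naturality transfer for the leftover case. The only differences are cosmetic: you route the nonempty case through the empty outer star where the paper uses an arbitrary $Y$ via the cospan $X_1\to X_1\amalg Y\from Y$, and you run the disjoint-singleton argument at every nonempty star where the paper runs it only at $\singleton$ and then transfers the empty relation from $\singleton$ to arbitrary stars.
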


\begin{proof}

We will show that $p$ sends every relation to $0\in N$ in four paragraphs corresponding to the following four steps. First we will show this for any complete relation, then for any non-empty relation, then for the empty 1-ary relation, and finally for the empty relation of any arity.

Let $Y\in\Ob(\mcS)$ be any finite set and let $K_Y\in\Rel_A(Y)$ denote the complete relation $(A^Y\ss A^Y)$. Consider the  morphism $\phi\in\Hom_\mcS(;Y)$ given by the cospan $\emptyset\to Y\From{\id_Y}Y$. Then the image under $\Rel_A(\phi)\taking\singleton\to\Rel_A(Y)$ of the unique element in the domain is $K_Y$. On the other hand, the image under $\mcN(\phi)\taking\singleton\to\mcN(Y)=N$ of the unique element in the domain is the empty sum $0\in N$. By naturality, any algebra morphism $p\taking\Rel_A\to\mcN$ must assign $p(Y)(K_Y)=0$. 

Now let $X_1,Y\in\Ob(\mcS)$ be any finite sets, and let $(R\ss A^{X_1})$ be any element of $\Rel_A(X_1)$ such that $R\neq\emptyset$. Consider the morphism $\psi\in\Hom_\mcS(X_1;Y)$ given by the cospan $X_1\to X_1\amalg Y\from Y$. The image under $\Rel_A(\psi)\taking\Rel_A(X_1)\to\Rel_A(Y)$ of $R$ is again $K_Y$, whereas the function $\mcN(\psi)\taking N\to N$ is the identity. By naturality, $p(X_1)(R)=0$.

Now let $X_1=X_2=Y=\singleton$, and let $a_1\neq a_2\in A$ be two distinct elements. Let $R_1=\{a_1\}\ss A^{X_1}$ and let $R_2=\{a_2\}\ss A^{X_2}$. Consider the morphism $\upsilon\taking (X_1,X_2)\to Y$ given by the cospan $\singleton\amalg\singleton\to\singleton\from\singleton$. Since $R_1$ and $R_2$ are nonempty, but $\Rel(\upsilon)(R_1,R_2)=\emptyset$, we have 
$$p(Y)(\emptyset)=p(X_1)(R_1)+p(X_2)(R_2)=0+0=0.$$

Finally, let $X_1$ be any finite set, $Y=\singleton$. Let $\psi\taking X_1\to Y$ be given by the cospan $X_1\to X_1\amalg Y\from Y$. If $\emptyset\ss A^{X_1}$ is the empty relation then its image under $\Rel_A(\psi)$ is $\emptyset\in\Rel_A(Y)$. As above we have $p(X_1)(\emptyset)=0$.

\end{proof}

Proposition \ref{prop:no alg hom Rel to N} shows that commutative monoids, considered as as $\mcS$-algebras, are not useful invariants of the $\mcS$-algebra $\Rel_A$. In fact, non-trivial, non-injective algebra morphisms out of $\Rel_A$ seem to be rare, and this provides a nice challenge to interested readers---see Conjecture \ref{conj:quotient-free}. Example \ref{ex:equivalence relations} provides another attempt, and Proposition \ref{prop:equivs not informative} shows that it fails to be any more informative. However, Example \ref{ex:equivalence relations} is interesting in its own right.

\begin{example}\label{ex:equivalence relations}

In this example, we discuss another $\mcS$-algebra in which local equivalence relations are wired together to form a global one. Given a finite set $X\in\Ob(\mcS)$, let $Eq(X)$ denote the set of equivalence relations on $X$. We can think of an equivalence relation $\sim$ on $X$ in terms of the surjection $X\to X/\sim$, and thus depict each equivalence relation on $X$ by connecting the appropriate wires internally. Here is a picture of an equivalence relation ($g\sim h\sim a, b\sim c, d\sim e$) on $X=\{a,b,c,d,e,f,g,h\}$:

\begin{center}
\includegraphics[height=1.5in]{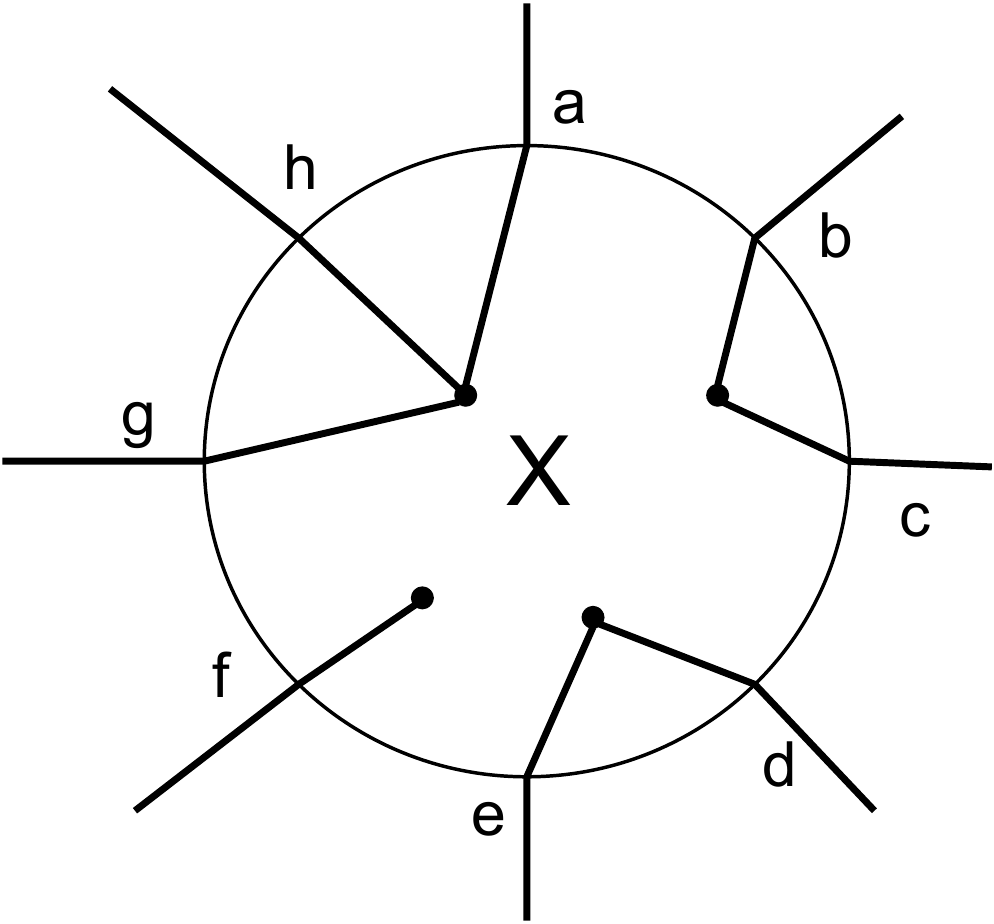}
\end{center}

Suppose $\phi\taking (X_1,\ldots,X_n)\to Y$ is a morphism in $\mcS$ given by the cospan $X_1\amalg\cdots\amalg X_n\To{f}C\From{g}Y$. It is easy to see visually how $Eq(\phi)\taking Eq(X_1)\times\cdots\times Eq(X_n)\to Eq(Y)$ should act: if two wires are connected through the wiring diagram, then we consider them to be connected globally. 
\begin{center}
\includegraphics[height=2in]{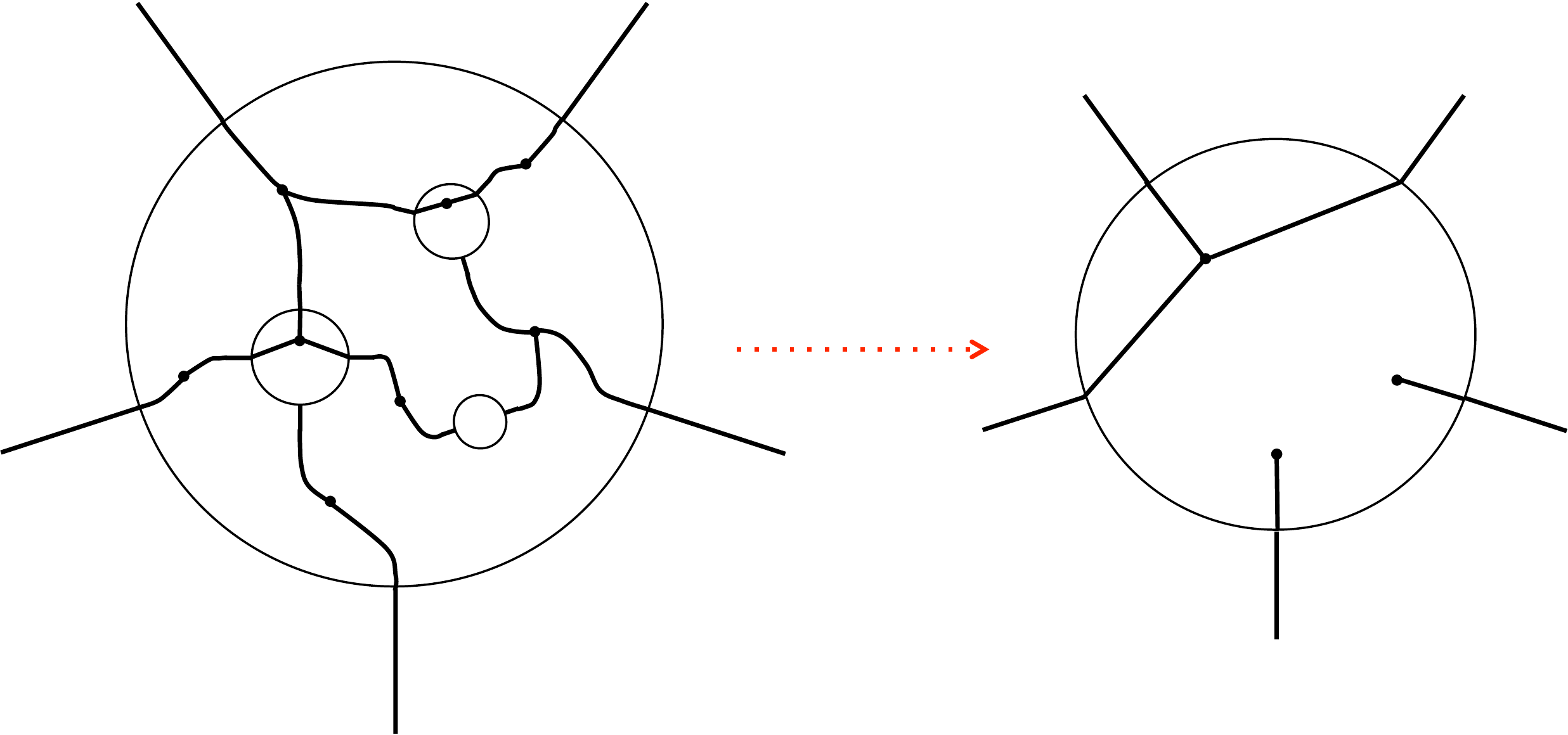}
\end{center}
Of course, this can all be done formally. Briefly, given a cospan $\coprod_iX_i\To{f} C\From{g} Y$ and surjections $X_i\surj Q_i$ for each $i\in\disc{n}$, we form the surjection $\coprod_iX_i\surj\coprod_iQ_i$, push it out along $f$ to give some new set $P$, factor the composite $Y\surj C\inj P$ as a surjection followed by an injection, and take the surjective part $Y\surj C$ as our induced equivalence relation on $Y$.

\end{example}

\begin{proposition}\label{prop:equivs not informative}

Let $A\in\Ob(\Set)$ be a set with cardinality $|A|\geq 2$. Let $Eq$ and $\Rel_A$ be the $\mcS$-algebras discussed above. The only algebra morphism $p\taking\Rel_A\to Eq$ is trivial. 

\end{proposition}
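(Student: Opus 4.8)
The plan is to imitate the four-paragraph forcing argument of Proposition \ref{prop:no alg hom Rel to N}: rather than constructing $p$, I would evaluate $\Rel_A$ and $Eq$ on a short library of wiring diagrams whose effect on both algebras is transparent, and then let naturality of $p$ propagate the values. The boundary input is the same as before. The nullary diagram $\emptyset\to Y\From{\id_Y}Y$ sends the unique element of $\singleton$ to the complete relation $K_Y$ under $\Rel_A$, whereas under $Eq$ the pushout along the empty map leaves $C=Y$ untouched, so it produces the discrete equivalence relation $\Delta_Y$. Naturality therefore forces $p_Y(K_Y)=\Delta_Y$: every complete relation goes to the discrete relation, with $\Delta$ playing the role that $0$ played in the monoid case.

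Next I would assemble a dictionary of two further diagram types on which both algebras are computable. First, the disjoint (product) diagram on single-wire stars, $\{x\}\amalg\{x'\}\To{\id}\{x,x'\}\From{\id}\{x,x'\}$, realizes $\Rel_A$ as the set-theoretic product $P_x\times P_{x'}$ and realizes $Eq$ as $\Delta_{\{x,x'\}}$, since $g$ is injective and no cables are merged; hence every "rectangle" relation is forced to the discrete relation. Second, a duplication diagram, in which two output wires are soldered to the one cable carrying a single input wire, realizes the diagonal in $\Rel_A$ and, by the formal construction in Example \ref{ex:equivalence relations}, connects those two output wires in $Eq$; thus the single correlated tuple $\{(0,0)\}=\Rel_A(\phi)(\{0\})$ is forced to the indiscrete relation $\nabla$. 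Composing with projection diagrams (which are transparent on both algebras) reduces the determination of $p_X(R)$ on any pair of wires $x,x'$ to the two-wire case.

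The key step, and the step I expect to be the main obstacle, is to collide these forcings through a single conjunctive-query diagram $\phi\taking(X_1,X_2)\to Y$ of the join-then-project shape, where an inner wire of $X_1$ and an inner wire of $X_2$ are soldered to a common cable that is then hidden from $Y$. On $\Rel_A$ this is a join followed by the \emph{existential} projection of the image step (as in Example \ref{ex:singly typed relational alg}); on $Eq$ it is pure \emph{connectivity} through the shared cable. Choosing the rectangles $R_1=\{0\}\times A$ and $R_2=A\times\{0\}$ is decisive: the product dictionary forces $p(R_1)$ and $p(R_2)$ to be discrete, so $Eq(\phi)$ transmits no connection across the hidden cable and outputs $\Delta_Y$; yet the join-projection is exactly the single tuple $\{(0,0)\}$, which the duplication dictionary forces to $\nabla_Y$. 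Reconciling these two incompatible behaviours of existential projection versus connectivity is precisely what rigidifies $p$, and I expect the only delicate point to be arranging this pair of diagrams so that the two forced values provably disagree on a pair of wires.

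Once that collision is in hand, the remaining cases fall quickly. Projection reduces arbitrary relations to the two-wire situation already controlled, and the empty relation of any arity is handled last, as in the fourth paragraph of Proposition \ref{prop:no alg hom Rel to N}, by exhibiting it as a conjunction of nonempty relations. The net effect is that the constraints leave no room for a nontrivial assignment, collapsing $p$ onto the trivial morphism exactly as summation collapsed onto $0$ before.
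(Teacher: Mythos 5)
Your first step coincides with the paper's: the nullary cospan $\emptyset\to Y\From{\id_Y}Y$ forces $p_Y(K_Y)=\Delta_Y$ by naturality. Your two dictionary computations are also individually correct: for the product cospan $\{x\}\amalg\{x'\}\To{\id}\{x,x'\}\From{\id}\{x,x'\}$, naturality plus the fact that $Eq(\singleton)$ is a one-point set forces $p(P_1\times P_2)=\Delta_{\disc{2}}$ for \emph{all} $P_1,P_2\ss A$; and for the duplication cospan $\{x\}\to\{c\}\From{g}\{y_1,y_2\}$ with $g(y_1)=g(y_2)=c$, the same reasoning forces $p(\{(a,a)\|a\in P\})=\nabla$ (the indiscrete equivalence relation) for all $P\ss A$. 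The collision you engineer in your third paragraph is real as well: the join-then-project diagram applied to $R_1=\{0\}\times A$ and $R_2=A\times\{0\}$ forces $p(\{(0,0)\})=\Delta_{\disc{2}}$, while duplication forces $p(\{(0,0)\})=\nabla$. (You do not even need that third diagram: since $\{(0,0)\}=\{0\}\times\{0\}$ is simultaneously a rectangle and a diagonal, your two dictionary entries already disagree on it.)

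The genuine problem is what this disagreement means. Every forcing above is an unconditional consequence of naturality---nowhere did you assume $p$ is nontrivial---so you have derived a contradiction from the bare existence of $p$. Your argument therefore proves that $\Hom_{\mcS\sets}(\Rel_A,Eq)$ is empty, not that its only element is the trivial morphism; in particular, the constant-$\Delta$ assignment you intend to ``collapse onto'' violates your own duplication constraint (since $\Delta_{\disc{2}}\neq\nabla$), so it is not an algebra morphism either. Consequently your final paragraph cannot be carried out: after an unconditional contradiction there are no ``remaining cases,'' and the program of showing that every relation maps to $\Delta$ is refuted, not completed, by your dictionary. This is precisely where your plan diverges from the paper's proof and from Proposition \ref{prop:no alg hom Rel to N}, where the constant-$0$ map really is an algebra morphism: the paper only ever invokes wiring diagrams whose $Eq$-output on discrete inputs is again discrete (its nullary diagram, its two-star join $\psi$, and its projection $X\to\disc{2}$), so every constraint it derives is consistent with the trivial assignment, and contradictions are drawn only \emph{conditionally}, from the hypothesis that some $p(R)$ equals the nontrivial relation $m$. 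To stay parallel to the paper you would have to banish diagrams, like duplication, in which two outer wires meet a common cable; alternatively, keep your argument, but then you are asserting the strictly stronger statement that no algebra morphism $\Rel_A\to Eq$ exists at all, under which reading the proposition holds only vacuously---a tension your write-up needs to confront rather than paper over.
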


\begin{proof}

The trivial equivalence relation $\Delta_X\taking X\to X\times X$ on a set $X$ is the one corresponding to the surjection $\xymatrix@1{X\ar@{->>}[r]^{\id_X}&X}$. For any finite set $X\in\Ob(\mcS)$, consider the morphism $\phi\in\Hom_\mcS(;X)$ given by the cospan $\emptyset\to X\From{\id_X}X$. It induces an element $\Rel_A(\phi)(\ast)\in\Rel_A(X)$ and an element $Eq(\phi)(\ast)\in Eq(X)$, namely the complete relation $K_X$ and the trivial equivalence relation $\Delta_X$, respectively. It follows by naturality that under any $\mcS$-algebra morphism $p\taking\Rel_A\to Eq$, we have $p(K_X)=\Delta_X$.

Let $X=\disc{2}$, and let $R\ss A^X$ be any relation; we will show that $p(R)=\Delta_X$. For contradiction, assume that $p(R)$ is the only non-trivial equivalence relation on $\disc{2}$, namely the one given by the surjection $m\taking\disc{2}\to\disc{1}$. Let $\psi\taking(\disc{2},\disc{2})\to\disc{2}$ be the cospan $\{1_a,2_a\}\amalg\{1_b,2_b\}\to\{1_a,1_b,2_{ab}\}\from\{1_a,1_b\}$ be the wiring diagram as drawn to the left below:
\begin{center}
\includegraphics[height=1in]{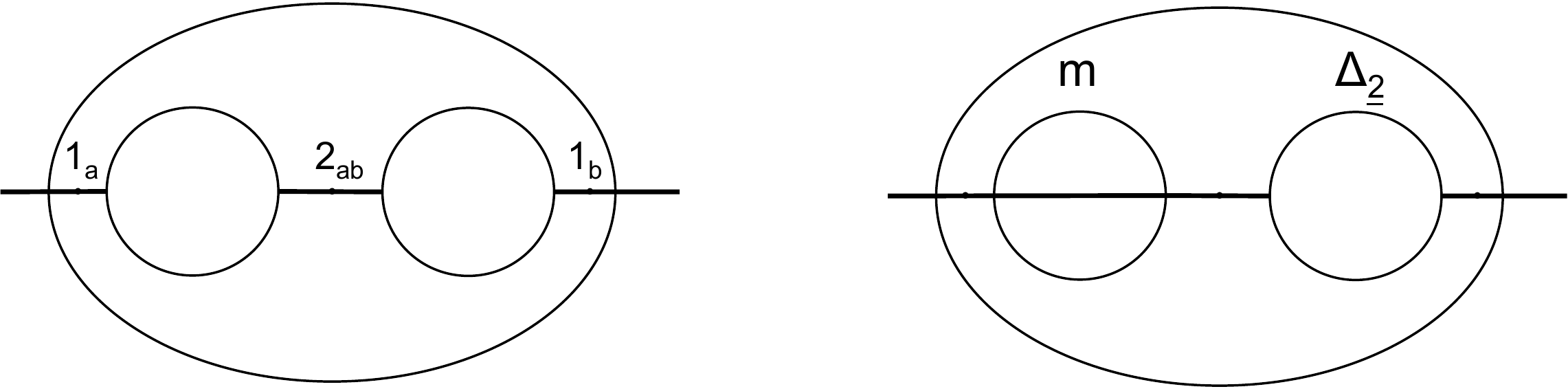}
\end{center}
There are two cases. If $R\neq \emptyset$ then $R=\Rel_A(\psi)(R,K_{\disc{2}})$ and (as seen in the right-hand diagram above) we have
$$
m=p(R)=p(\Rel_A(\psi)(R,K_{\disc{2}}))=Eq(\psi)(p(R),p(K_{\disc{2}}))=Eq(\psi)(m,\Delta_{\disc{2}})=\Delta_{\disc{2}},
$$
a contradiction. If $R=\emptyset$ we argue as in the proof of Proposition \ref{prop:no alg hom Rel to N}: let $a_1\neq a_2\in A$, let $R_1=\{(a_1,a_1)\}$ and $R_2=\{(a_2,a_2)\}$. Although both are nonempty, their $\psi$-conjunction is empty, so we have that 
$$m=p(\emptyset)=p(\Rel_A(\psi)(R_1,R_2))=Eq(\psi)(p(R_1),p(R_2))=Eq(\psi)(\Delta_{\disc{2}},\Delta_{\disc{2}})=\Delta_{\disc{2}},$$
again a contradiction.

Finally, let $X\in\Ob(\mcS)$ be any finite set, let $R'\ss A^X$ be any relation, and let  $m':=p(X)(R')\in Eq(X)$ be its image under $p$; we want to show that $m'\taking X\surj Q$ corresponds to the trivial equivalence relation $\Delta_X$, i.e. that $m'$ is injective. For contradiction, assume $m'$ is not injective, and let $x_1,x_2\in X$ be such that $m'(x_1)=m'(x_2)$, and consider the morphism $\phi\taking X\to\disc{2}$ given by the cospan $X\To{\id_X}X\From{(x_1,x_2)}\disc{2}$. Let $R=\Rel_A(\phi)(R')$ be the induced relation and let $m=Eq(\phi)(m')$ be the induced equivalence relation. Note that $m\taking\disc{2}\to\disc{1}$ is nontrivial, and that we have $p(R)=m$, contradicting the work above.

\end{proof}

It may be that there are no good invariants one can extract from the relational $\mcS$-algebra. Indeed, after considering the matter for a while we conjecture this to be the case. We make this precise using the following definition.

\begin{definition}

Let $\mcO$ be an operad and let $R\taking\mcO\to\Sets$ be an algebra. We say that $R$ is {\em quotient-free} if, for any other $\mcO$-algebra $P$ and epimorphism $f\taking R\to P$, either $f$ is an isomorphism or $P$ is isomorphic to the terminal $\mcO$-algebra $\singleton^\mcO$.

\end{definition}

\begin{conjecture}\label{conj:quotient-free}

Let $\mcS$ be the singly typed wiring diagrams operad, let $A\in\Ob(\Set)$ be a set, and let $\Rel_A\in\Ob(\mcS\sets)$ be the relational algebra of type $A$. We conjecture that $\Rel_A$ is quotient-free.

\end{conjecture}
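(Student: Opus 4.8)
The plan is to analyze \emph{all} epimorphisms out of $\Rel_A$ by first disposing of the surjective (regular) ones through a direct collapse argument, and then confronting the non-surjective ones, which I expect to be the genuine obstacle. Since colimits in $\mcS\sets$ are computed objectwise, a surjective algebra morphism $f\taking\Rel_A\to P$ is the coequalizer of its kernel pair and is therefore determined by a \emph{congruence}: a family of equivalence relations $\sim_X$ on $\Rel_A(X)=\PP(A^X)$, one for each star $X$, compatible with every operation $\Rel_A(\phi)$. So the first goal is a \emph{congruence dichotomy}: every congruence on $\Rel_A$ is either the diagonal (whence $f$ is an isomorphism) or total (whence each $P(X)$ is a singleton and $P\iso\singleton^{\mcS}$, the terminal algebra).

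To prove the dichotomy I would use two explicit wiring diagrams, evaluated via Diagram~(\ref{dia:composition formula for singly typed}). The first is the intersection $\cap\taking(X,X)\to X$ represented by the cospan $X\amalg X\to X\from X$ folding both copies by the identity, which sends $(R,R')\mapsto R\cap R'$. The second is the \emph{disconnecting} diagram $\delta\taking(X,Y)\to Y$ represented by
$$X\amalg Y\To{\id}X\amalg Y\From{\iota_Y}Y,$$
whose cables keep the $X$-wires entirely apart from $Y$; it sends $(R_0,T)\mapsto T$ when $R_0\neq\emptyset$, but $(\emptyset,T)\mapsto\emptyset$ by the empty-input behaviour noted in Remark~\ref{rem:strangeness}. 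Now suppose a congruence is nontrivial, say $R\sim_X R'$ with $R\neq R'$, and pick $a\in A^X$ in the symmetric difference, WLOG $a\in R\setminus R'$. Feeding the fixed singleton $\{a\}$ into the second slot of $\cap$ gives $R\cap\{a\}=\{a\}$ and $R'\cap\{a\}=\emptyset$, so congruence yields $\{a\}\sim_X\emptyset$ with $\{a\}$ \emph{nonempty}. Feeding this identified pair into the first slot of $\delta$, with an arbitrary fixed $T\ss A^Y$ in the second slot, gives $T=\Rel_A(\delta)(\{a\},T)\sim_Y\Rel_A(\delta)(\emptyset,T)=\emptyset$. Thus $T\sim_Y\emptyset_Y$ for \emph{every} relation $T$ and every star $Y$; the congruence is total. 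This settles every surjective epimorphism.

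For a general epimorphism $f\taking\Rel_A\to P$ I would take the objectwise image, factoring $f=\iota\circ e$ with $e\taking\Rel_A\surj P'$ surjective onto a subalgebra $P'\inj P$ via a mono $\iota$. Because $f$ is epi, its right factor $\iota$ is epi, and the dichotomy above forces $e$ to be an isomorphism or $P'$ to be terminal. The problem therefore reduces to the following: show that neither $\Rel_A$ nor $\singleton^{\mcS}$ admits a \emph{proper epimorphic extension}, i.e.\ a monomorphism that is simultaneously an epimorphism but not an isomorphism. The intended approach is to produce, for any such $\iota\taking\Rel_A\inj P$ and any hypothetical element $q\in P(X)$ outside the image, two distinct algebra maps $P\to\Sets$ (or into a suitable family of test $\mcS$-algebras) that agree on the image of $\iota$, contradicting the assumption that $\iota$ is epi.

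The main obstacle is exactly this last step: epimorphisms in categories of algebras need not be surjective, so being quotient-free is strictly stronger than the congruence dichotomy, and ruling out proper epimorphic extensions requires enough maps out of every extension of $\Rel_A$ to separate points. Constructing such a separating family of test algebras — essentially establishing that $\Rel_A$ is epi-closed in $\mcS\sets$ — is where I expect the real difficulty to reside, and is presumably why the statement is posed only as a conjecture rather than a theorem.
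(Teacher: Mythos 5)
There is nothing in the paper to compare your attempt against: this statement is posed as Conjecture \ref{conj:quotient-free} and left open; Propositions \ref{prop:no alg hom Rel to N} and \ref{prop:equivs not informative} are offered only as evidence, ruling out morphisms from $\Rel_A$ into two specific families of target algebras. Your attempt must therefore stand on its own, and its first half does. The congruence dichotomy is correct: the fold cospan $X\amalg X\to X\from X$ does compute intersection under Diagram (\ref{dia:composition formula for singly typed}), the cospan $X\amalg Y\To{\id}X\amalg Y\From{\iota_Y}Y$ does exhibit the empty-input behaviour of Remark \ref{rem:strangeness}, and the two-step collapse (a single identification $R\sim_X R'$ forces $\{a\}\sim_X\emptyset$, which then forces $T\sim_Y\emptyset_Y$ for every star $Y$ and every $T$) is a valid use of congruence compatibility. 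This is genuinely stronger, in the surjective direction, than the paper's two propositions, since it constrains quotients by arbitrary congruences rather than morphisms into prescribed targets, and it needs no hypothesis $|A|\geq 2$. In particular, if one reads ``epimorphism'' in the definition of quotient-free as objectwise surjection (equivalently, regular epimorphism, which is what the name ``quotient'' suggests), your dichotomy settles the conjecture completely.

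Under the literal reading, however, the gap you flag yourself is genuine and is the whole remaining content of the conjecture. The definition quantifies over all categorical epimorphisms in $\mcS\sets$, and in categories of algebras for a multi-sorted algebraic theory epimorphisms need not be surjective (in monoids $\NN\inj\ZZ$ is epi; in rings $\ZZ\to\QQ$ is epi). Your image factorization $f=\iota\circ e$ and the observation that the right factor $\iota$ of an epi is epi are both correct, so the conjecture is validly reduced to: neither $\Rel_A$ nor $\singleton^\mcS$ admits a proper mono-epi extension in $\mcS\sets$. But nothing in the proposal addresses this reduced statement, and it is not routine: one cannot argue that an epi extension of $\Rel_A$ must be surjective by reachability from constants, since the $0$-ary operations of $\mcS$ generate only the ``equality-type'' relations (images of maps $A^g\taking A^C\to A^Y$), not all of $\PP(A^Y)$; and producing a separating family of test algebras receiving maps from an arbitrary extension $P$ is precisely the hard part. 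So what you have is a correct partial result together with an honest and valid reduction --- not a proof; the conjecture remains open.
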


The analysis above is all in the case $|A|\geq 2$, whereas this is not assumed in Conjecture \ref{conj:quotient-free}. But one can check easily enough that if $|A|\leq 1$ then $\Rel_A$ is quotient-free (although note the subtlety that occurs for the object $\emptyset\in\Ob(\mcS)$, e.g. by (\ref{dia:rel definition}) we have $|\Rel_\emptyset(\emptyset)|=2$).

\section{The typed wiring diagrams operad}\label{sec:typed wd operad}

The typed wiring diagrams operad is like the untyped version from Example \ref{ex:singly-typed wd}, except that each wire and each cable carries a set of possible values, and these sets must match for wires or cables that are soldered together. In Section \ref{sec:definition of typed WD} we will define $\mcT$ and compare it to $\mcS$. In Sections \ref{sec:databases} and \ref{sec:closed operads} we will discuss some applications of both operads.

\subsection{Definition of typed WD operad}\label{sec:definition of typed WD}

\begin{example}[{\bf Main Operad Variant}]\label{ex:main operad variant}

In this example we discuss the operad $\mcT$ of typed wiring diagrams. Each object of $\mcT$ consists of a finite set $X$ and a function $\tau\taking X\to\Ob(\Set)$, i.e. 
$$\Ob(\mcT)=\{(X,\tau)\|X\in\Ob(\Fin),\;\tau\taking X\to\Ob(\Set)\}\iso\Ob(\Set/\Fin).$$
We refer to each object as a {\em typed star}. If $X=\{x_1,x_2,\ldots,x_n)$, we may denote $(X,\tau)$ by 
$$\{x_1\colon\tau(x_1), \;x_2\colon\tau(x_2),\ldots, x_n\colon\tau(x_n)\}.$$ 
Each $x\in X$ is called a {\em wire of $X$}, the set $\tau(x)$ is called the {\em value type of $x$}, and each element $v\in\tau(x)$ is called a {\em value on $x$}.

We draw an object such as $\{a\colon A,\; b\colon B,\; c\colon C,\; d\colon D,\; e\colon E\}$ as follows:

\begin{align}\label{dia:labeledOb}
\parbox{2in}{\includegraphics[width=2in]{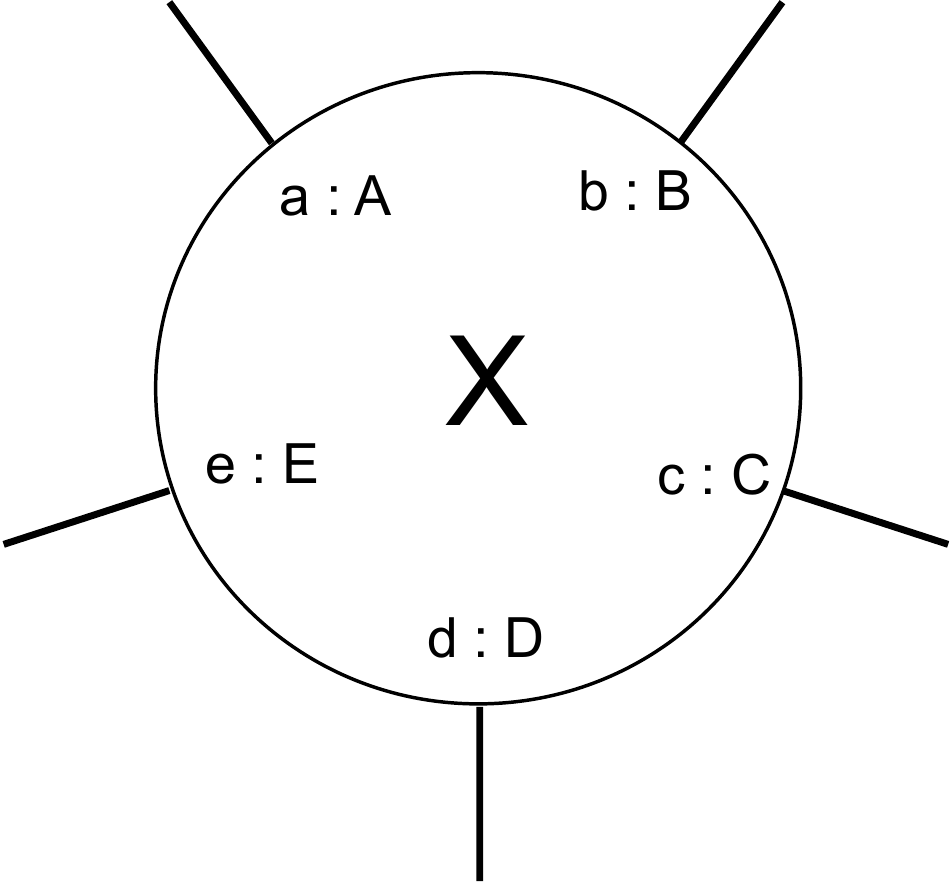}}
\end{align}
The order of the elements around the circle does not matter: $(X,\tau)$ is just a set of five sets.

Let $n\in\NN$ be a natural number, and let $(Y,\upsilon)\in\Ob(\mcT)$ and $(X_1,\tau_1),\ldots,(X_n,\tau_n)\in\Ob(\mcT)$ be typed stars. A morphism 
$$\phi\taking(X_1,\tau_1),\ldots,(X_n,\tau_n)\to(Y,\upsilon)$$
consists of a cospan over $\Ob(\Set)$, i.e. a finite set $C\in\Ob(\Fin)$, a function $\sigma\taking C\to\Ob(\Set)$, and functions $f,g$ as below, making the diagram commute: 
\begin{align}\label{dia:typed morphism}
\xymatrix{
&Y\ar[d]^g\ar@/^1pc/[ddr]^{\upsilon}\\
X_1\amalg\cdots\amalg X_n\ar@/_1pc/[rrd]_{\amalg_i\tau_i}\ar[r]_-f&C\ar[dr]^\sigma\\
&&\Ob(\Set)
}
\end{align}
We refer to such a morphism as a {\em typed wiring diagram}; we refer to the domain objects as the {\em inner stars}, the codomain object as the {\em outer star}, and the set $C$ as the {\em set of cables}. We similarly follow Example \ref{ex:singly-typed wd} in terms of both notation and depiction, drawing our typed wiring diagrams just like untyped wiring diagrams (\ref{dia:unlabeledMorphism}), except perhaps including the typing information (compare (\ref{dia:unlabeledOb}) and (\ref{dia:labeledOb})).

Given composable arrows $\phi'$ and $\phi_1,\ldots,\phi_n$, we can arrange their cospans in the form of a W-shaped diagram and form the pushout as in (\ref{dia:untyped composition}). The pushout dutifully maps to $\Ob(\Set)$, so we have the desired composition formula. As in Example \ref{ex:singly-typed wd}, the composition formula is intuitive when drawing pictures; see Diagram (\ref{dia:untyped composition drawing}).

\end{example}

There is a functor $U\taking\mcT\to\mcS$ given by forgetting the labelings on each star. Let $A\in\Ob(\Set)$ be a set. There is an operad functor $F_A\taking\mcS\to\mcT$ given by taking a singly-typed star to the typed-star with every wire having type $A$. The composite 
$$\mcS\To{F_A}\mcT\To{U}\mcS$$ is the identity on $\mcS$.

\begin{lemma}

For any set  $A\in\Ob(\Set)$, let $F_A\taking\mcS\to\mcT$ be the morphism given above and let $F_A^*\taking\mcT\sets\to\mcS\sets$ be the functor from Lemma \ref{lemma:pullback of algebras}, and let $\Rel_A\in\mcS\sets$ be the algebra given in Example \ref{ex:singly typed relational alg}. There exists a $\mcT$-algebra $\Rel\taking\mcT\to\Sets$ such that for every set $A$ we have an isomorphism of $\mcS$-algebras $$F_A^*(\Rel)\iso\Rel_A.$$

\end{lemma}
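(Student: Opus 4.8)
The plan is to define $\Rel$ by the same span–pullback–image recipe used for $\Rel_A$ in Example~\ref{ex:singly typed relational alg}, replacing the constant value set $A$ by the wire-dependent value types. On objects I would set
$$\Rel(X,\tau):=\PP\Big(\textstyle\prod_{x\in X}\tau(x)\Big),$$
the set of relations on the product of all the value types. Writing $\Gamma(X,\tau):=\prod_{x\in X}\tau(x)$ for this product, the point is that $\Gamma$ is a contravariant functor on typed stars: a typed map $h\taking(X,\tau)\to(C,\sigma)$ (that is, $h\taking X\to C$ with $\sigma\circ h=\tau$) induces a restriction function $\Gamma(C,\sigma)\to\Gamma(X,\tau)$ sending $(v_c)_{c\in C}$ to $(v_{h(x)})_{x\in X}$, which is well-typed precisely because $\sigma(h(x))=\tau(x)$. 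This $\Gamma$ is the exact typed analogue of the functor $A^{(-)}$ from Example~\ref{ex:singly typed relational alg}, and it sends coproducts of stars to products, so $\Gamma(X_1,\tau_1)\times\cdots\times\Gamma(X_n,\tau_n)\iso\Gamma\big(\coprod_iX_i,\amalg_i\tau_i\big)$.

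Given a typed wiring diagram $\phi$ as in (\ref{dia:typed morphism}), applying $\Gamma$ to its cospan yields a span
$$\textstyle\prod_i\Gamma(X_i,\tau_i)\From{\Gamma(f)}\Gamma(C,\sigma)\To{\Gamma(g)}\Gamma(Y,\upsilon),$$
and I would define $\Rel(\phi)(R_1,\dots,R_n)$ by copying diagram (\ref{dia:composition formula for singly typed}) verbatim with $A^{(-)}$ replaced by $\Gamma$: pull $R_1\times\cdots\times R_n$ back to a relation $Q\ss\Gamma(C,\sigma)$ along $\Gamma(f)$, then take $S$ to be the image of $Q$ under $\Gamma(g)$. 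The identity law is immediate, since the identity cospan induces identity functions under $\Gamma$, so pullback and image both act as the identity on relations.

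The main work, exactly as in the singly-typed case, is checking that $\Rel$ respects the operadic composition. The composite $\phi'\circ(\phi_1,\dots,\phi_n)$ is built from the pushout $P=D\amalg_{\coprod_iY_i}\coprod_iC_i$ of (\ref{dia:untyped composition}). The key observation is that $\Gamma$ sends this pushout to a pullback: a dependent tuple on $P$ is precisely a pair of dependent tuples on $D$ and on $\coprod_iC_i$ that restrict to the same tuple on $\coprod_iY_i$, so $\Gamma(P)\iso\Gamma(D)\times_{\Gamma(\coprod_iY_i)}\Gamma(\coprod_iC_i)$. This is the typed replacement for the fact that $A^{(-)}$ turns the composition pushout into a pullback. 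Granting it, compatibility of $\Rel$ with composition follows by the same diagram chase as for $\Rel_A$, using that epi--mono factorizations in $\Set$ are unique up to isomorphism so that successive ``pullback-then-image'' steps compose correctly. I expect this pushout-to-pullback verification, and its bookkeeping inside the associativity diagram, to be the only real obstacle; everything else is a transcription of Example~\ref{ex:singly typed relational alg}.

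Finally, the isomorphism $F_A^*(\Rel)\iso\Rel_A$ is essentially definitional. The functor $F_A$ sends a singly-typed star $X$ to $(X,\tau_A)$ with $\tau_A$ constant at $A$, and $\Gamma(X,\tau_A)=\prod_{x\in X}A\iso A^X$ canonically, so $F_A^*(\Rel)(X)=\PP(A^X)=\Rel_A(X)$. On morphisms, $F_A(\phi)$ is the cospan of $\phi$ with every cable typed by $A$, so $\Gamma(f)$ and $\Gamma(g)$ become $A^f$ and $A^g$, and the construction defining $\Rel(F_A(\phi))$ becomes literally diagram (\ref{dia:composition formula for singly typed}) defining $\Rel_A(\phi)$. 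Thus $F_A^*(\Rel)$ and $\Rel_A$ agree on objects and morphisms under the canonical identification $\prod_{x\in X}A\iso A^X$, and naturality of this identification in $\phi$ supplies the desired isomorphism of $\mcS$-algebras.
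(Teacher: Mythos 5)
Your proposal is correct and is essentially the paper's own proof: the paper defines $\Rel$ on objects by $\Rel(X,\tau)=\{R\ss\prod_{x\in X}\tau(x)\}$ and on morphisms by the same pullback-then-image construction, and likewise treats both the composition law and the identification $F_A^*(\Rel)\iso\Rel_A$ as straightforward by construction. The only difference is that you spell out the verifications the paper leaves implicit (functoriality of the dependent product, the pushout-to-pullback observation, and the canonical identification $\prod_{x\in X}A\iso A^X$), which is a welcome addition rather than a deviation.
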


\begin{proof}

We simply follow Example \ref{ex:singly typed relational alg} in defining {\em the relational $\mcT$-algebra}, $\Rel\taking\mcT\to\Sets$. Given a typed star $\ol{X}=(X,\tau)\in\Ob(\mcT)$, let $\Pi_X(\tau)$ denote the set 
$$\Pi_X(\tau):=\prod_{x\in X}\tau(x).$$
We assign $\Rel(\ol{X})=\{R\ss\Pi_X(\tau)\}$, the set of {\em relations of type $\ol{X}$}. 

Suppose given a typed wiring diagram $\phi\taking(X_1,\tau_1),\cdots,(X_n,\tau_n)\to(Y,\upsilon)$ as in (\ref{dia:typed morphism}) and relations $R_i\in\Pi_{X_i}(\tau_i)$ for each $i\in\ul{n}$. Analogous with the singly-typed case (\ref{dia:composition formula for singly typed}) we form the pullback diagram 
$$
\xymatrix{
&\Pi_Y(\upsilon)\\
\Pi_{X_1}(\tau_1)\times\cdots\times \Pi_{X_n}(\tau_n)&\Pi_C(\sigma)\ar[l]_-{\Pi_f}\ar[u]_{\Pi_g}&S\ar@{^(->}@/_1pc/[ul]\\
R_1\times\cdots\times R_n\ar@{^(->}[u]&Q\lrlimit\ar[l]\ar[u]\ar@{->>}@/_1pc/[ur]
}
$$
and take as $S$ the image of $Q$ under the induced function $Q\to\Pi_Y(\upsilon)$. We define
$$\Rel(\phi)(R_1,\ldots,R_n):=S$$
and refer to it as the {\em $\phi$-conjunction} of $R_1,\ldots,R_n$. It is easy to see that, under this specification, $\Rel$ respects the composition formula for $\mcT$ and thus is a $\mcT$-algebra.

To see that $F_A^*(\Rel)\iso\Rel_A$ is straightforward by construction.

\end{proof}

\subsection{Databases using the typed wiring diagrams operad}\label{sec:databases}

The so-called \href{http://en.wikipedia.org/wiki/Relational_model}{{\em logical structure} of a relational database} can be defined as follows. At the highest level, the logical structure consists of a finite set $P$ of predicates, each of which will eventually correspond to (i.e. holds the place for) a table. A {\em predicate} $p$ consists of a finite set of attributes, each of which is a set, i.e. $p=(n_p,A_p)$, where $n_p\in\Ob(\Fin)$ is a finite set and $A_p\taking n_p\to\Ob(\Set)$. We sometimes denote $p$ by $(A_p(1),\ldots,A_p(n_p))$. The set $n_p$ corresponds to the set of columns of $p$'s table, and for each column $i\in n_p$, the set $A_p(i)$ is the set of values that can occur in that column. Some relational databases do not allow for different data types. In such cases there is a single universe $U\in\Ob(\Set)$ and $A_p(i)=U$ for each $i\in n_p$, 

Given such a logical structure for a relational database, an {\em instance} of it consists of a relation $R_p\ss A_p(1)\times\cdots\times A_p(n_p)$ for each predicate $p\in P$. Being in the presence of an instance means that the tables we looked forward to above have been filled in with data of the appropriate types.

A {\em query} is roughly a question asked of the database. Given a set $P$ of predicates as above, a {\em conjunctive query} is often written in the following form
\begin{tabbing}
\hsp\=SELECT \hspace{.1in}\=$x_1,x_2,\ldots,x_k$\\
\>FROM \>$p_1,p_2,\ldots,p_\ell$\\
\>WHERE \>$x_{i_1}=x_{i_1}'$ AND $x_{i_2}=x_{i_2}'$ AND $\cdots$ AND $x_{i_m}=x_{i_m}'$
\end{tabbing}
Here each $x_i$ (and $x_{i_j}$ and $x_{i_j}'$) refers to a specific attribute in a specific predicate $p_1,\ldots, p_\ell$. A conjunctive query can also be thought of as follows. Start with a set of relations and take their Cartesian product (using the FROM-clause), then choose those rows for which the values in specified pairs of columns agree (using the WHERE-clause), and then project down to a subset of the columns (using the SELECT-clause). 

Wiring diagrams (i.e. morphisms in $\mcT$) give us a way to picture conjunctive queries. For example, a predicate $p=(n_p,A_p)$ is precisely the same thing as a typed star, i.e. an object $p\in\Ob(\mcT)$. 
\footnote{
In the case of a single universe $U$, we can make due with the singly-typed wiring diagrams operad $\mcS$ and the algebra $\Rel_U$.
}
A morphism $\phi\taking ((n_1,A_1),\ldots,(n_\ell,A_\ell))\to (k,x)$ consists of a cospan over $\Set$, i.e. a diagram of the form 
\begin{align*}
\xymatrix{
&k\ar[d]^g\ar@/^1pc/[ddr]^{x}\\
n_1\amalg\cdots\amalg n_\ell\ar@/_1pc/[rrd]_{\amalg_iA_i}\ar[r]_-f&C\ar[dr]^A\\
&&\Ob(\Set)
}
\end{align*}

Conjunctive queries are a bit more limited than what can be expressed using wiring diagrams.
\!\!\comment{
In particular, the morphism $\phi$ in a conjunctive query is assumed to be {\em frugal} in the sense of the following definition.

\begin{definition}\label{def:frugal}

Let $\phi\taking(X_1,\ldots,X_n)\to Y$ be a morphism in $\mcT$ represented by the cospan $(X_1\amalg\cdots\amalg X_n)\To{f}C\From{g}Y$. Let $p\taking \amalg_{i\in n}X_i\to n$ denote the function that takes each element of the disjoint union to the component in which it occurs. Let $P$ be the pushout in the following diagram:
$$
\xymatrix{
\amalg_{i\in n}X_i\ar[r]^f\ar[d]_p&C\ar[d]\\
n\ar[r]&P\lrlimit
}
$$
Then we say that the wiring diagram $\phi$ is {\em frugal} if and only if $|P|=1$.

\end{definition}

The idea of Definition \ref{def:frugal} is the following. In a wiring diagram $\phi$, the wires connect stars and cables, forming a bipartite graph. We say that $\phi$ is frugal if this graph has only one connected component. 
}
This will be explored a bit in Example \ref{ex:non-conjunctive}.

\begin{example}\label{ex:wiki database query}

Here is an example of a \href{http://en.wikipedia.org/wiki/Conjunctive_query}{\text conjunctive query} from wikipedia, written in the database query language SQL. It uses constants (`male' and `female'), whereas our above description of conjunctive queries does not, but we can still make good sense of it with wiring diagrams.
\begin{tabbing}
\hsp\=SELECT\hspace{.1in}\=L.student, L.address\\
\>FROM\>attends a1, gender g1, attends a2, gender g2, lives L\\
\>WHERE\>a1.student=g1.student AND a2.student=g2.student \\
\>AND\>L.student=g1.student AND a1.course=a2.course\\
\>AND\>g1.gender=`male' AND g2.gender=`female'
\end{tabbing}

Below in (\ref{dia:wikiQueryWD}) we show this query in the form of a wiring diagram. The five black inner stars correspond to the five tables in the FROM clause; the two red inner stars are the constants; the two wires of the outer star correspond to the two attributes in the SELECT clause; and the precise wiring pattern corresponds to the conditions in the WHERE clause.
\begin{align}\label{dia:wikiQueryWD}
\includegraphics[width=\textwidth]{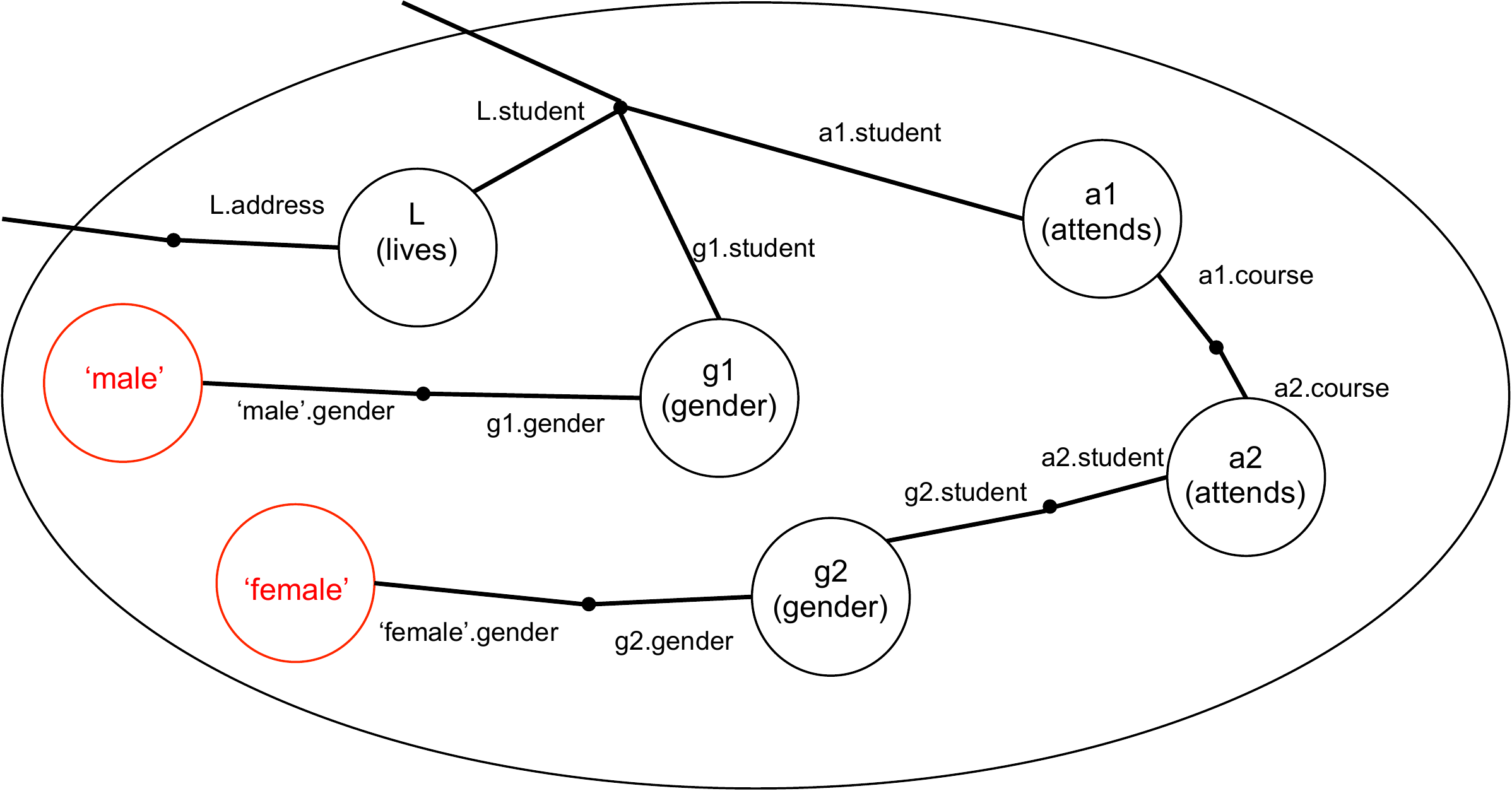}
\end{align}
The constants `male' and `female' are each simply considered as a 1-ary relation on the set of genders.

\end{example}

\begin{example}\label{ex:non-conjunctive}

Here we show a non-conjunctive query that can be made using a wiring diagram, namely $y.\exists x. R(x)$ for some unary relation $R$. The wiring diagram $\phi$ is represented by the cospan is $\{x\}\To{f}\{x,y\}\From{g}\{y\}$,
\begin{center}
\includegraphics[height=1in]{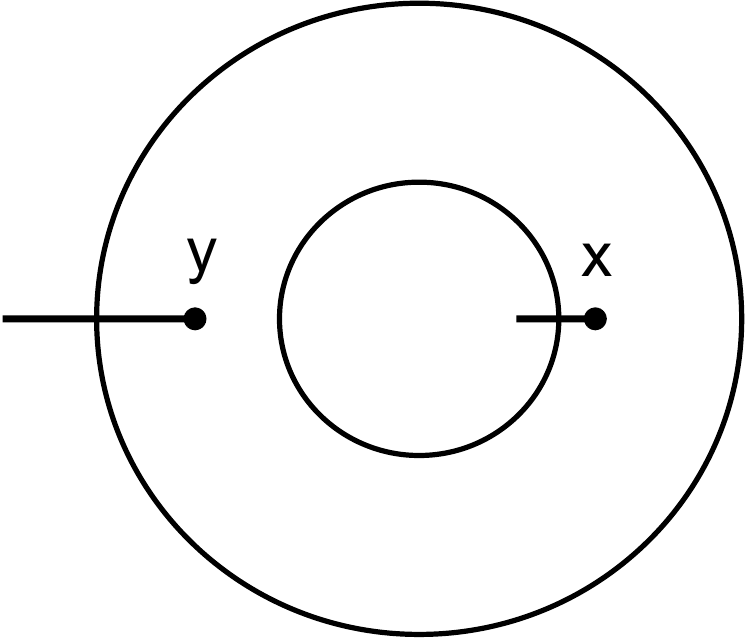}
\end{center}
and the reason it is considered not to be conjunctive is that $y$ is not in the image of $f$.

Given a relation $R\in\Rel(\{x\})$, the function $\Rel(\phi)\taking\Rel(\{x\})\to\Rel(\{y\})$ returns $(\emptyset\ss S)$ if $R=\emptyset$ and returns $(S\ss S)$ if $R\neq\emptyset$. (See Remark \ref{rem:strangeness}). That is $\Rel(\phi)(R)$ is $y . \exists x. R(x)$.

\end{example}

\subsection{Disjunctive queries}

We can add disjunctive queries to our operadic query language by considering $\Rel$ as an $\mcT$-algebra object in $\JLats$, the operad of join semi-lattices, rather than in $\Sets$. See Example \ref{ex:operads s}. To make this precise, we have the following proposition.

\begin{proposition}\label{prop:algebras in JLat}

Let $\mcT$ be the typed wiring diagrams operad, and let $\Rel\taking\mcT\to\Sets$ be the relational $\mcT$-algebra. Let $U\taking\JLats\to\Sets$ be the operad functor that sends a join-semilattice to its underlying set. Then there exists a morphism $\JRel\taking\mcT\to\JLats$ such that the following diagram of operads commutes:
$$
\xymatrix@=15pt{\mcT\ar@{-->}[rr]^{\JRel}\ar[rdd]_{\Rel}&&\JLats\ar[ldd]^U\\\\&\Sets.}
$$

\end{proposition}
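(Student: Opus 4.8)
The plan is to lift the relational algebra $\Rel$ through the forgetful functor $U$, exploiting that $U$ is induced (as in Example \ref{ex:operads s}) by the faithful, product-preserving forgetful functor $\JLat\to\Set$. Concretely, $\JRel$ will agree with $\Rel$ on underlying sets and underlying functions, so the only genuine work is to (i) equip each value $\Rel(\ol X)$ with a join-semilattice structure and (ii) check that each $\Rel(\phi)$ is the underlying function of a join-semilattice homomorphism out of the product. Once these are in place the triangle $U\circ\JRel=\Rel$ holds by construction, and the operad-functor laws for $\JRel$ (identity and composition) come for free: since a join-semilattice homomorphism is determined by its underlying function, the comparison map $\Hom_\JLat(\prod_i\JRel(X_i),\JRel(Y))\to\Hom_\Set(\prod_i\Rel(X_i),\Rel(Y))$ is injective, so the diagrams witnessing these laws for $\JRel$ commute as soon as their $U$-images do, and the latter commute because $\Rel$ is already an operad functor.

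For step (i), I would put on each $\Rel(\ol X)=\PP(\Pi_X(\tau))$ the join-semilattice structure given by union, with the empty relation $\emptyset$ as bottom; these are the joins that are meant to model disjunction of queries. Declaring $\JRel(\ol X)$ to be this join-semilattice, its underlying set is exactly $\Rel(\ol X)$, as required for commutativity with $U$ on objects.

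The heart of the argument, and the step I expect to be the main obstacle, is step (ii). Recall that for $\phi\taking(X_1,\ldots,X_n)\to Y$ represented by a cospan $\coprod_i X_i\To{f}C\From{g}Y$ over $\Ob(\Set)$, the $\phi$-conjunction is the composite $\Rel(\phi)(R_1,\ldots,R_n)=\Pi_g\big(\Pi_f^{-1}(R_1\times\cdots\times R_n)\big)$, where $\Pi_f^{-1}$ is preimage along $\Pi_f\taking\Pi_C(\sigma)\to\prod_i\Pi_{X_i}(\tau_i)$ and $\Pi_g$ is direct image along $\Pi_g\taking\Pi_C(\sigma)\to\Pi_Y(\upsilon)$ (this encodes the pullback-then-image in the defining diagram of $\Rel$). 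I would verify the homomorphism condition from two elementary facts: preimage along any function preserves arbitrary unions and the empty set, and direct image along any function preserves arbitrary unions and the empty set. Since $\Rel(\phi)$ is a preimage followed by an image, it is monotone and carries the bottom relation to the bottom relation; and because the Cartesian product $R_1\times\cdots\times R_n$ distributes over unions coordinatewise, one checks that $\Rel(\phi)$ sends the joins of the domain join-semilattice to joins in $\PP(\Pi_Y(\upsilon))$. This is precisely the distributivity of $\phi$-conjunction over disjunction, and it is what certifies $\Rel(\phi)$ as a morphism of $\JLats$; setting $\JRel(\phi):=\Rel(\phi)$ then completes the construction. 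The delicate point to get right here is exactly the interaction of the union-join with the product appearing in the definition of the conjunction, and that is where I would spend the bulk of the verification.
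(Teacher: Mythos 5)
Your proposal follows essentially the same route as the paper's proof: equip each $\Rel(\ol{X})$ with the union join-semilattice structure, observe that $\Rel(\phi)$ is a preimage $\Pi_f^{-1}$ followed by a direct image $\Pi_g$, and use that both operations preserve unions (the paper compresses this into ``both pullback and image are left adjoints; since unions are colimits, they are preserved''). Your further remark that the operad-functor laws for $\JRel$ are inherited from those of $\Rel$ because $U$ is faithful is correct, and is a point the paper leaves implicit.

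However, the step you single out as delicate is a genuine obstruction, and your claimed verification of it does not go through. By Example \ref{ex:Ss}, a morphism in $\JLats_n(\JRel(\ol{X}_1),\ldots,\JRel(\ol{X}_n);\JRel(\ol{Y}))$ is a join-preserving map out of the \emph{Cartesian product} semilattice, whose joins are computed componentwise; so $\JRel(\phi)$ would need to satisfy
$$\Rel(\phi)(R_1\cup R_1',\ldots,R_n\cup R_n')\;=\;\Rel(\phi)(R_1,\ldots,R_n)\,\cup\,\Rel(\phi)(R_1',\ldots,R_n'),$$
and this fails for $n\geq 2$, because $(R_1\cup R_1')\times\cdots\times(R_n\cup R_n')$ contains cross terms such as $R_1\times R_2'\times\cdots$ that coordinatewise distributivity does not account for. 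Concretely, let $\phi\taking(\{x_1\},\{x_2\})\to\{y\}$ be the typed wiring diagram with a single cable $C=\{c\}$, all wires of type $A=\{0,1\}$, and $f(x_1)=f(x_2)=g(y)=c$; then $\Pi_f$ is the diagonal, so $\Rel(\phi)(R_1,R_2)=R_1\cap R_2$, and
$$\Rel(\phi)(\{0\}\cup\{1\},\{1\}\cup\{0\})=\{0,1\}\;\neq\;\emptyset=\Rel(\phi)(\{0\},\{1\})\cup\Rel(\phi)(\{1\},\{0\}).$$
What your distributivity argument does prove is that $\Rel(\phi)$ preserves unions \emph{in each variable separately}, the others held fixed. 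Note that the paper's own proof has the same gap: it writes the representing cospan with a single domain star, $(X,\tau)\To{f}(C,\sigma)\From{g}(Y,\upsilon)$, and verifies union-preservation of $R\mapsto\Pi_g(\Pi_f^{-1}(R))$, which is exactly the unary case $n=1$, where the problem cannot arise. The proposition becomes true (and both your argument and the paper's become complete) if $\JLats$ is instead interpreted as the multicategory whose $n$-ary morphisms are maps preserving joins in each variable separately --- i.e., the operad arising from the tensor product of join-semilattices rather than from the Cartesian-product prescription of Example \ref{ex:Ss}.
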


\begin{proof}

Given a typed star $\ol{X}:=(X,\tau)\in\Ob(\mcT)$, where $X\in\Ob(\Fin)$ and $\tau\taking X\to\Ob(\Sets)$, we define the join-semilattice 
$$\JRel(X,\tau):=\{R\ss \prod_{x\in X}\tau(x)\},\hsp \tn{with}\hsp \bigvee_{i\in I}R_i:=\bigcup_{i\in I}R_i.$$
The underlying set $U\circ\JRel(X,\tau)$ is $\Rel(X,\tau)$, so on objects, the triangle commutes. We need to say what $\JRel$ does on morphisms $\phi\taking\ol{X}_1,\ldots,\ol{X}_n\to\ol{Y}$ in $\mcT$.

Given a cospan $(X,\tau)\To{f}(C,\sigma)\From{g}(Y,\upsilon)$, as in Diagram (\ref{dia:typed morphism}), we have a span of sets 
$$\prod_X\tau\From{\Pi_f}\prod_C\sigma\To{\Pi_g}\prod_Y\upsilon.$$
The prescription for $\Rel(\phi)$ says that for any relation $R\ss\prod_X\tau$, we pull it back to a relation on $\prod_C\sigma$ and then take its image in $\prod_Y\upsilon$. We need only to see that this process preserves the join, i.e. union of subsets. But both pullback and image are left adjoints; since unions are colimits, they are preserved. 

\end{proof}

Using algebra objects in $\JLat$, as in Proposition \ref{prop:algebras in JLat}, we can take the union of any two relations if they have the same type $(X,\tau)$. Doing so will appropriately commute with $\phi$-conjunction.

\section{Closed operads for plug-and-play and recursion}\label{sec:closed operads}

\subsection{Plug-and-play}

\href{http://en.wikipedia.org/wiki/Hot_plugging}{\text Hot-plugging}, or a variant called {\em plug-and-play}, is a computer engineering term, which refers to the capacity whereby one or more components of a computer system can be replaced without interruption to the system. We use the term only slightly differently here, in that we are not considering computers per se but any system that can be described solely by a wiring diagram. As such, rebooting is not the issue. Instead the issue is to build systems that have slots in which components can be filled in later. 

To make sense of this idea, we will use the notion of closed operads. The term should be understood in relation to Cartesian closed categories (CCCs), like those used as a denotational semantics for computer programs. The similarity between CCCs and closed operads is found in the ability to {\em curry} a morphism $f$ with multiple input types, i.e. to translate $f$ into an equivalent function that has fewer inputs but whose output is a function on the remaining inputs. The difference between closed operads and CCCs is that closed operads literally have multiple inputs, whereas CCCs in some sense {\em simulate} having multiple inputs, using Cartesian product. Since $\Set$ is a Cartesian closed category, the relationship between operads and CCCs is captured using algebras on an operad $\mcO$, i.e. operad functors $\mcO\to\Sets$ by which the multiple input types of an $\mcO$-morphism are translated into the respective Cartesian product.

The following definitions and terminology follows \cite{Man}.

\begin{definition}\label{def:closed operad}

A {\em closed operad}, denoted $(\mcO, \iHom, ev)$, consists of three constituents (A. operad, B. internal Hom-objects, C. evaluation map) conforming to one law (1. externalization). The required constituents are as follows:
\begin{enumerate}[\hsp A.]
\item There is a operad $\mcO$.
\item For each object $z\in\Ob(\mcO)$, finite set $n\in\Ob(\Fin)$, and $n$-indexed set of objects $y\taking n\to\Ob(\mcO)$, there is an object $[y\iHom z]\in\Ob(\mcO)$ called {\em the internal Hom-object associated to $y$ and $z$}.
\item For each object $z\in\Ob(\mcO)$, finite set $n\in\Ob(\Fin)$, and $n$-indexed set of objects $y\taking n\to\Ob(\mcO)$, there is an $(n+1)$-ary evaluation morphism 
$$ev_{y;z}\taking([y\iHom z],y)\too z$$
\end{enumerate}
We may denote the object $[y\iHom z]$ by $[(y_1,\ldots,y_n)\iHom z]$, when convenient or by $[y\iHom z]_\mcO$ if necessary. The law that governs these constituents is as follows:
\begin{enumerate}[\hsp 1.]
\item Suppose given an object $z\in\Ob(\mcO)$, finite sets $m,n\in\Ob(\Fin)$, and so-indexed sets of objects $x\taking m\to\Ob(\mcO)$ and $y\taking n\to\Ob(\mcO)$. The function 
$$extl_{x,y;z}\taking\mcO_{m}(x;[y\iHom z])\too\mcO_{m+n}(x,y;z),$$
defined by sending a morphism $\phi\taking x\to[y\iHom z]$ to the composite
$$x_1,\ldots,x_m,y_1,\ldots,y_n\To{\phi,\id_{y_1},\ldots,\id_{y_n}}[(y_1,\ldots,y_n)\iHom z],y_1,\ldots,y_n\To{ev_{y;z}}z,$$ 
is bijective.
\end{enumerate}

The isomorphism $extl_{x,y;z}$ defined above is called the {\em externalization function}; we denote its inverse by $intl_{x,y;z}$ and call it the {\em internalization function}.

A {\em functor} between closed operads is just a functor between their underlying operads, and an {\em algebra} on a closed operad is just an algebra on the underlying operad.

\end{definition}

\begin{remark}

The categorically-minded reader may notice that Definition \ref{def:closed operad} does not explicitly require that the internal Hom-objects $[X\iHom Y]$ be functorial in $X$ or $Y$. However, this functoriality follows from their tight relation to the actual hom-sets, $\mcO(X;Y)$, which are functorial in $X$ and $Y$ by definition. 

\end{remark}

\begin{example}

The operad $\Sets$ of sets is closed. Indeed, given sets $Y_1,\ldots,Y_n,Z$ we can define the internal-Hom object $[(Y_1,\ldots,Y_n)\iHom Z]\in\Ob(\Sets)$ to be the set $\Hom_\Sets(Y_1,\ldots,Y_n;Z)$. 

\end{example}

\begin{proposition}

The singly-typed wiring-diagram operad $\mcS$ (respectively, the typed wiring-diagram operad $\mcT$) is closed. 

\end{proposition}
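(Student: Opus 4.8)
The plan is to exhibit the internal Hom-objects and evaluation morphisms explicitly, and then to verify the externalization law by a single pushout computation. Working first with $\mcS$, let $y\taking n\to\Ob(\mcS)$ be a family of finite sets $Y_1,\ldots,Y_n$ and let $Z\in\Ob(\mcS)$. I would define the internal Hom-object to be the disjoint union, regarded as a single object of $\mcS$:
$$[(Y_1,\ldots,Y_n)\iHom Z]:=Y_1\amalg\cdots\amalg Y_n\amalg Z.$$
The idea is that a ``function of the $Y_j$ valued in $Z$-configurations'' is just a star recording the inner wires $Y_1,\ldots,Y_n$ together with the output wires $Z$. The evaluation $ev_{y;Z}\taking([y\iHom Z],Y_1,\ldots,Y_n)\to Z$ I would define as the cospan whose cable set is again $Y_1\amalg\cdots\amalg Y_n\amalg Z$, in which the wire-slot $[y\iHom Z]$ is soldered on by the identity, each input $Y_j$ is soldered onto the corresponding $Y_j$-summand of the cable set, and the codomain $Z$ is soldered onto the $Z$-summand.

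The key observation I would then make is that the two Hom-sets appearing in the externalization law carry literally the same underlying data. Unwinding the definition from Example \ref{ex:singly-typed wd}, a morphism $(X_1,\ldots,X_m)\to[y\iHom Z]$ is a cospan
$$X_1\amalg\cdots\amalg X_m\To{f}C\From{g}Y_1\amalg\cdots\amalg Y_n\amalg Z,$$
while a morphism $(X_1,\ldots,X_m,Y_1,\ldots,Y_n)\to Z$ is a cospan
$$X_1\amalg\cdots\amalg X_m\amalg Y_1\amalg\cdots\amalg Y_n\To{f'}C'\From{g'}Z.$$
In both cases the cable set receives a single map out of $X_1\amalg\cdots\amalg X_m\amalg Y_1\amalg\cdots\amalg Y_n\amalg Z$, so there is an evident bijection $\Psi$ between the two Hom-sets, given by re-partitioning the source of this map into ``domain wires'' versus ``codomain wires''. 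This $\Psi$ is plainly compatible with isomorphisms of cospans, hence descends to a bijection on morphisms of $\mcS$. It therefore suffices to check that the externalization function $extl$, which is defined as composition with $ev_{y;Z}$, coincides with $\Psi$.

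To verify $extl=\Psi$ I would compute the operadic composite $ev_{y;Z}\circ(\phi,\id_{Y_1},\ldots,\id_{Y_n})$ using the pushout formula (\ref{dia:untyped composition}), with $ev_{y;Z}$ as the outer morphism, $\phi$ plugged into the $[y\iHom Z]$-slot, and the identities plugged into the remaining slots. Each $\id_{Y_j}$ contributes a cable set equal to $Y_j$, and $ev_{y;Z}$ solders both these summands and its own $Z$-summand back through $g$; I expect the pushout $P$ to collapse entirely onto the cable set $C$ of $\phi$, with each $Y_j$-summand and the $Z$-summand identified, via $g$, with their images in $C$. The resulting composite cospan then has cable set $C$ and the same soldering data as $\phi$, only re-read with the $Y_j$ as domain wires and $Z$ as codomain, which is exactly $\Psi(\phi)$. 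Hence $extl$ is a bijection and $\mcS$ is closed. The one point demanding care, which is the only real obstacle, is the bookkeeping in this pushout: one must confirm that the three families of summands of the evaluation cable set glue down as claimed, that no spurious identifications occur, and that floating cables of $C$ survive, so that $P\iso C$ on the nose.

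Finally, for the typed operad $\mcT$ the construction is identical with the typing functions carried along. I would take $[((Y_1,\tau_1),\ldots,(Y_n,\tau_n))\iHom(Z,\zeta)]$ to be $Y_1\amalg\cdots\amalg Y_n\amalg Z$ equipped with the copaired typing function $(\tau_1,\ldots,\tau_n,\zeta)$, and define the evaluation by the same solderings as above. Every map occurring in the evaluation cospan and in the pushout is the inclusion of a summand, so all the typing constraints encoded in Diagram (\ref{dia:typed morphism}) are automatically respected, and the very same computation shows that the externalization law holds over $\Ob(\Set)$.
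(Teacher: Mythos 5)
Your proposal is correct and takes essentially the same route as the paper's own proof: the same internal Hom-object $Y_1\amalg\cdots\amalg Y_n\amalg Z$, the same fold-map evaluation cospan, and the same verification that the composition pushout collapses onto the cable set $C$, so that externalization is exactly your re-partitioning bijection $\Psi$ (which the paper presents as its ``easier way to understand'' the correspondence). The typed case is likewise handled identically in both, by carrying the typing maps to $\Ob(\Set)$ through the whole construction.
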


\begin{proof}

We prove this for $\mcS$ because it is less cluttered, but the proof for $\mcT$ is perfectly analogous, adding a map from everything in sight to $\Ob(\Set)$, as in Example \ref{ex:main operad variant}.

Suppose given finite sets $Y_1,\ldots,Y_n,Z\in\Ob(\mcS)$. We define the associated internal Hom-object using the coproduct, 
$$[(Y_1,\ldots,Y_n)\iHom Z]:=Y_1\amalg\cdots\amalg Y_n\amalg Z.$$ 
\begin{figure}[H]
\includegraphics[height=2.5in]{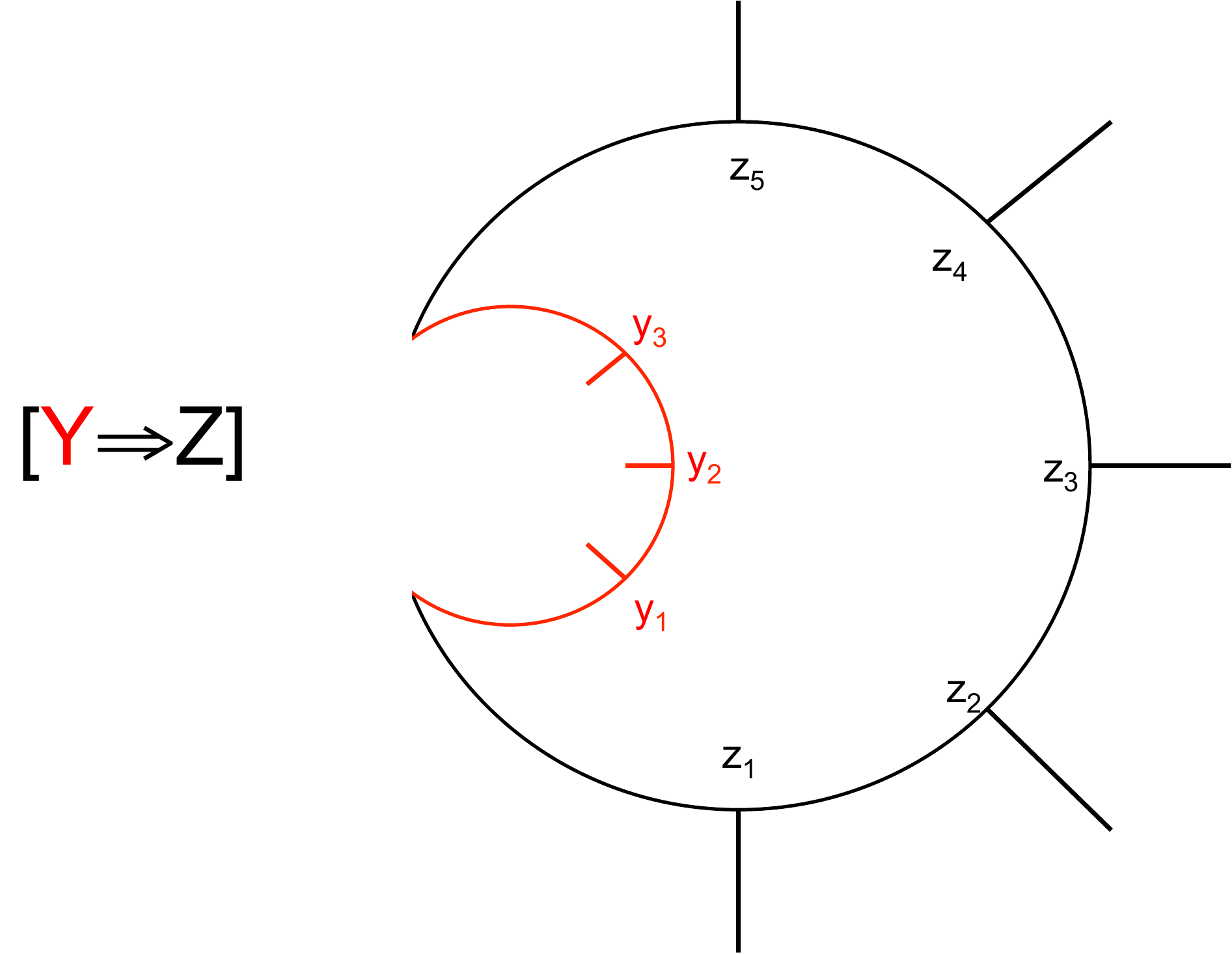}
\caption{
A picture of $[Y_1\iHom Z]$, where $n=1$, $Y_1=\{y_1,y_2,y_3\}$, and $Z=\{z_1,z_2,z_3,z_4,z_5\}$. The shape of the picture is meaningless; it is merely a suggestive depiction of the set $[Y_1\iHom Z]=Y_1\amalg Z$. 
}
\end{figure}
The evaluation morphism $ev_{Y;Z}\in\mcS_{n+1}([Y\iHom Z],Y;Z)$ is the following cospan
$$\left(\coprod_{i=1}^nY_i\right)\amalg Z\amalg\left(\coprod_{i=1}^n Y_i\right)\too\left(\coprod_{i=1}^n Y_i\right)\amalg Z\fromm Z,$$
where the first function is given by the fold map. Here is a picture of the evaluation map in the above case:
\begin{center}
\includegraphics[height=2.5in]{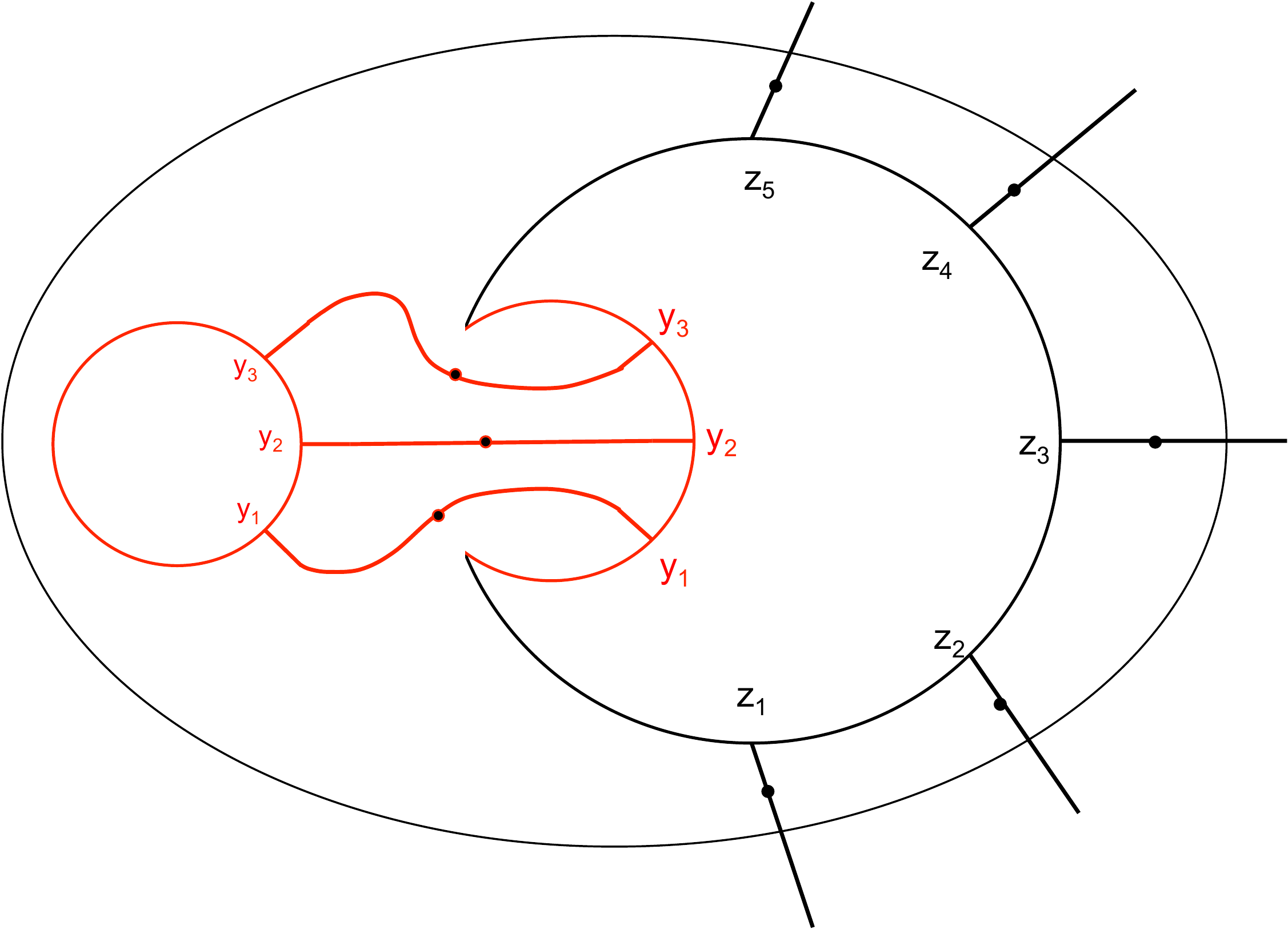}
\end{center}

To see that $(\mcS,\iHom,ev)$ satisfies the externalization condition, we begin with finite sets $X_1,\ldots,X_m,Y_1,\ldots,Y_n,Z\in\Ob(\mcS)$ and a morphism $\phi\taking X\to [Y\iHom Z]$ given by the cospan $X\To{f}C\From{g} [Y\iHom Z]$. We must show that the externalization function $extl_{X,Y;Z}\taking\mcS_m(X;[Y\iHom Z])\to\mcS_{m+n}(X,Y;Z)$ is a bijection, in accordance with Definition \ref{def:closed operad}. By definition, the morphism $extl_{X,Y;Z}(\phi)\taking (X,Y)\to Z$ is given by forming the following diagram
$$
\xymatrix{
&&Z\ar[d]\\
&[Y\iHom Z]\amalg Y\ar[r]^-{fold_Y}\ar[d]_-{g,\id_Y}&[Y\iHom Z]\ar[d]\\
X\amalg Y\ar[r]^{f,\id_Y}&C\amalg Y\ar[r]&C\lrlimit
}
$$
The bottom-right object is, according to the composition law for $\mcS$, a pushout; it is easy to check set-theoretically that this pushout is isomorphic to $C$, as written. The morphism $extl_{X,Y;Z}(\phi)$ is represented by the cospan $X\amalg Y\to C\from Z$ obtained by composites. However an easier way to understand what is going on is simply that $extl_{X,Y;Z}$ converts diagrams like the one on the left below to diagrams like the one on the right:
$$
\parbox{.4in}{\xymatrix{&Y\amalg Z\ar[d]^{g_1,g_2}\\X\ar[r]_{f}&C}}
\hsp\xymatrix{~\ar@{~>}[rr]^{extl_{X,Y;Z}}&&~}\hsp
\parbox{.4in}{\xymatrix{&Z\ar[d]^{g_2}\\X\amalg Y\ar[r]_{f,g_1}&C}}
$$
It is clear that this correspondence ($extl_{X,Y;Z}$) is a bijection, so $\mcS$ is indeed closed.

\end{proof}

\begin{remark}

We will show that in $\mcS$ (with an eye toward its algebra $\Rel$), one can speak of a morphism to the internal Hom-object $[(Y_1,\ldots,Y_n)\iHom Z]$ as an {\em explanation of $Z$, specifically in terms of $Y_1,\ldots,Y_n$}. If this is to be the case then when $n=0$ we must have that morphisms to $Z\iso[()\iHom Z]$ are explanations of $Z$ (in no specific terms). This makes sense: a morphism $(X_1,\ldots,X_n)\to Z$ is a wiring diagram, which explains the behavior of the outer relation in terms of, i.e. as a query on, the inner relations. Similarly, a morphism $X\to [Y\iHom Z]$ is the same as a morphism $(X,Y)\to Z$, so it is an explanation of $Z$ that in particular involves the star $Y$.

\end{remark}

\begin{definition}\label{def:closing transformation}

Suppose that $F\taking\mcO\to\mcO'$ is a morphism of closed operads. For any object $Y\in\Ob(\mcO)$, finite set $n\in\Ob(\Fin)$, and $n$-indexed set of objects $X\taking n\to\Ob(\mcO)$, let $ev_{X;Y}$ be the evaluation morphism in $\mcO$. Applying $F$ we obtain an $\mcO'$-morphism $F(ev_{X;Y})\taking (F[X\iHom Y]_\mcO,FX)\to FY$. Define the {\em closing transformation} of $F$ to be the $\mcO'$-morphism
\begin{align}\label{dia:closing transformation}
F_{X;Y}\taking F[X\iHom Y]_{\mcO}\too[FX\iHom FY]_{\mcO'},
\end{align}
obtained from $F(ev_{X;Y})$ by applying the internalization function.

\end{definition}

\subsection{Recursion}

We can think of recursion as what happens when a relation (e.g. a function) is ``plugged into itself". The story will go as follows. We use the fact that $\mcS$ is a closed operad and define a {\em recursive star} to be an object of the form $[Z\iHom Z]\in\Ob(\mcS)$. A {\em recursive setup} will be any wiring diagram $\phi\taking(X_1,\ldots,X_n)\to[Z\iHom Z]$ with output $[Z\iHom Z]$, and with the input stars filled in with relations $R_1\in\Rel(X_1),\ldots,R_n\in\Rel(X_n)$. Finally, a {\em recursive relation} for this setup will be any (non-empty) fixed point of the resulting function 
$$\phi(R_1,\ldots,R_n)\taking\Rel(Z)\to\Rel(Z).$$ The following picture of a recursive setup, for which the factorial function is a fixed point, may be helpful. 
\begin{align}\label{dia:factorial}
\includegraphics[width=\textwidth]{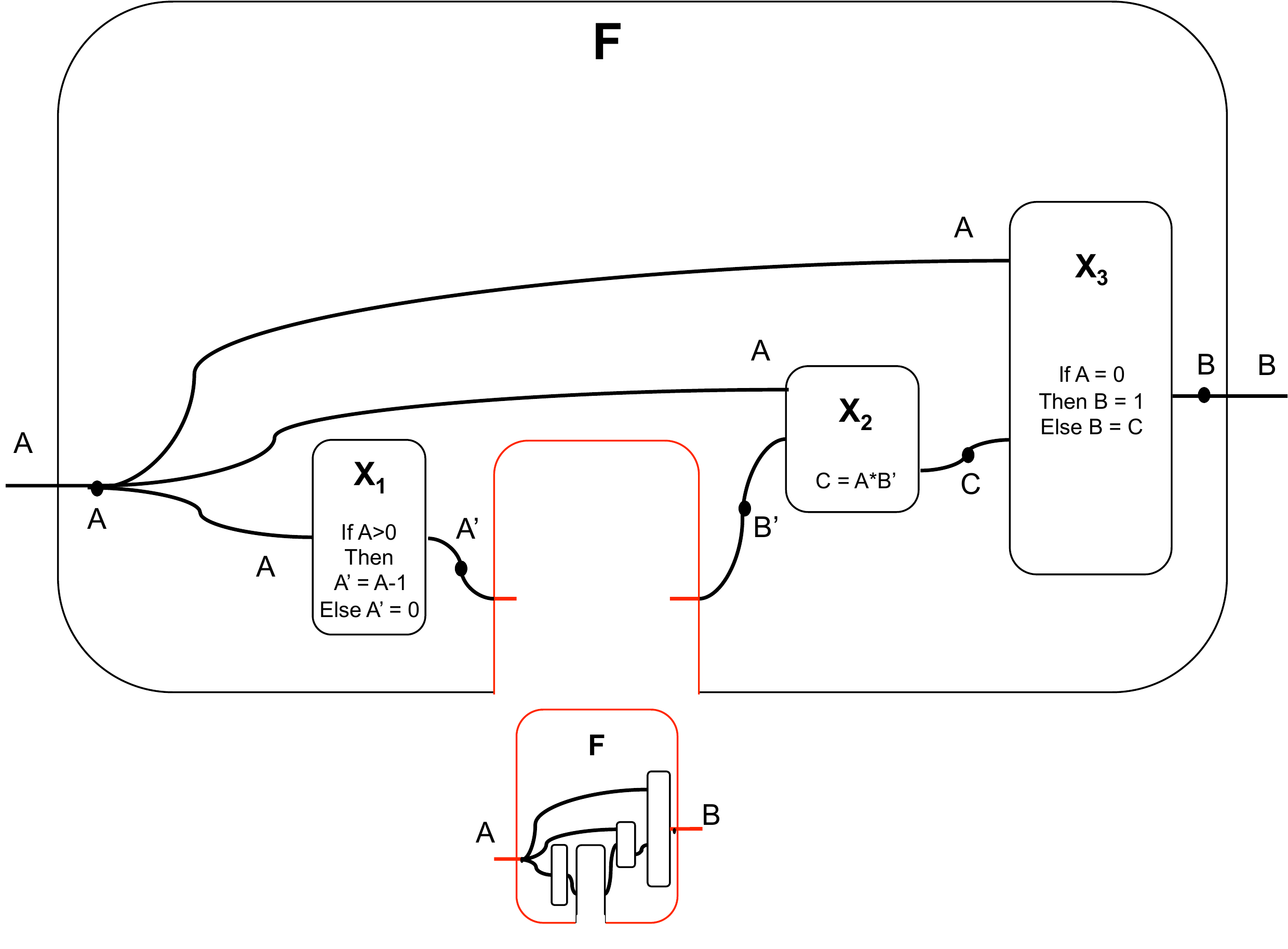}
\end{align}
We will now describe this picture in an example.

\begin{example}

First let us run through the recursive procedure being depicted in (\ref{dia:factorial}). We are describing a relation, which is supposedly the factorial function, in terms of some other relations and itself. The idea is that given a nonzero natural number $A$, we subtract 1, take the factorial $(A-1)!$, multiply this number by $A$, and output the result. If $A=0$, output 1.

The diagram shown above consists of two parts: a morphism $\phi\taking(X_1,X_2,X_3)\to[F\iHom F]$ in $\mcS$ and some chosen relations. Here $X_1:=\{A,A'\}, X_2:=\{A,B',C\}, X_3:=\{A,C,B\}$, and $F:=\{A,B\}$. The recursive star is $[F\iHom F]\iso\{A,A',B,B'\}$. The morphism $\phi$ is represented by the cospan
$$\phi\taking\{A,A'\}\amalg\{A,B',C\}\amalg\{A,C,B\}\too\{A,A',B,B',C\}\fromm\{A,A',B,B'\}.$$
The chosen relations are written in the boxes. They are: 
\begin{description}
\item [Decrement]`` If $A>0$ then $A'=A-1$ else $A'=0$ " as an element $R_1\in\Rel_\NN(X_1)$; 
\item [Multiplication] `` $C=A*B'$ '' as an element $R_2\in\Rel_\NN(X_2)$; and 
\item [Conditional] `` If $A=0$ then $B=1$ else $B=C$ " as an element $R_3\in\Rel_\NN(X_3)$.
\end{description}
Together these constitute an element $(R_1,R_2,R_3)$ of the left-hand side:
$$\Rel_\NN(X_1)\times\Rel_\NN(X_2)\times\Rel_\NN(X_3)\Too{\phi}\Rel_\NN([F\iHom F])
$$
This completes the recursive setup. From it we can form the $\phi$-conjunction, 
$$\Rel_\NN(\phi)(R_1,R_2,R_3)\in\Rel_\NN([F\iHom F]),$$ and applying to it the closing transformation (\ref{dia:closing transformation}), we have a function 
$$q\taking\Rel_\NN(F)\to\Rel_\NN(F).$$
In terms of the diagram above, $q$ is the function that takes a relation on $\NN\times\NN$, plugs it into the empty slot, and returns a relation on $\NN\times\NN$. A recursive relation for this setup is a non-empty fixed point of $q$. It turns out there is only one: the factorial relation, $B=A!$. 

\end{example}

\section{Future work}\label{sec:future work}

There are three main directions for taking this project further. The first is to investigate the mathematical properties of the wiring diagrams operad $\mcS$ (or $\mcT$) and its algebras; this is discussed in Section \ref{sec:future WD math}. The second is to use the wiring diagrams operad and its algebras to study real-world phenomena; this is discussed in Section \ref{sec:future WD phenomena}. The third is to change and expand the model, i.e. the operad $\mcS$ itself; this is discussed in Section \ref{sec:future expand WD}.

\subsection{Investigating mathematical properties of the WD operad and its algebras}\label{sec:future WD math}

It seems difficult to find any simplifying invariants of relations that are natural with respect to wiring diagrams. This was the content of Conjecture \ref{conj:quotient-free}. It would be quite interesting to know definitively whether such an invariant can exist. It would also be good to know whether relational $\mcS$-algebras have any privileged place in $\mcS\sets$, e.g. whether the algebras $\Rel_A$ can be characterized category-theoretically. Many such mathematical questions remain open about relational $\mcS$-algebras. 

It is also possible that we could greatly decrease the number of morphisms in each hom-set of $\mcS$ without having much affect on the category of algebras. For example, there is a sub-operad $\mcS'\ss\mcS$ in which each wiring diagram is required to have exactly one connected component, an upshot of which is that all hom-sets in $\mcS'$ are finite. We have not yet investigated how the category $\mcS'\sets$ compares to $\mcS\sets$.

The wiring diagrams operad $\mcS$ itself has more structure than we have considered here. First of all, the objects of $\mcS$ are finite sets, but we never discuss how functions between them enter the picture. Second, the morphisms $\phi\taking X\to Y$ in $\mcS$ are cospans in $\Fin$ of the form $X\to C\from Y$, but we never discuss the fact that there is a whole category of such morphisms (i.e. we can discuss morphisms of cospans). In other words, $\mcS$ can perhaps be lifted to something like an operad object in $\Cat$. 

It would also be useful to investigate whether aggregation, such as taking sums or averages in a database, can be understood in terms of the wiring diagrams operad $\mcS$ or $\mcT$, perhaps using the fact that they are closed. That is, we have a way to speak about functions from relations to relations, and aggregation may be phrased in this way. 

In that vein, it would be valuable to formalize more of relational database theory, including natural keys, normal forms, negation, nulls, queries etc., in the language of operads. 

\subsection{Using the WD operad and various algebras to study real-world phenomena}\label{sec:future WD phenomena}

One of the most important questions, in the author's opinion, is how structure and function are related in the human brain. If the brain is understood as a collection of neurons wired together at synapses, then we are tasked with explaining how complex functionality like walking and calculating can be housed in such a structure. The wiring diagrams operad can serve as a language for discussing the structure of the brain, in the sense that we need not think of neurons as indivisible objects, nor do we need to step from the neuron level directly to the brain level. The brain can be divided into different agencies, such as the visual cortex and the auditory cortex, and each of these can be broken down further. The WD operad allows us to zoom into any given part (either topological region or functional subset) and then connect our findings about it with those about other regions.

Our findings about each region, be it neuron, whole brain, or anything in between, might be phrased in terms of algebras over the operad $\mcS$. In this paper we used the relational $\mcS$-algebra, which would be relevant if we think of each portion of the brain as managing a relationship between its inputs and outputs. But in general, choosing an algebra allows us to do two things. First, we get to assign to each agency of the brain a set of possible states, and second, we get to assign a rule by which a state on any given whole emerges from the states on its constituent parts.

Besides using the WD operad to discuss the brain, we may also find it useful in understanding the process of design. It seems that intelligent animals can form a concept of what they want to see emerge in the world, and then work to create it. This requires a constant back-and-forth between what is known and what is intended (see \cite{HK}). We build a whole new object with certain properties by putting together known objects in creative patterns. The wiring diagrams operad or something related could be useful to frame the discussion, because it carefully delineates between the wiring pattern (operad), and how large-scale function emerges from small-scale function via that wiring pattern (algebra). 

\subsection{Expanding the model}\label{sec:future expand WD}

The wiring diagrams operad in this paper provides a simple picture for how smaller entities can be put together to form larger ones, but it is certainly not the only operad that would fit such a description. For example, a directed version of the wiring diagrams operad with an algebra of functions has been shown to exist and a paper on the subject \cite{SS} is in preparation. Directed wiring diagrams may be useful in describing processes, or recipes, whereby one process (taking inputs to outputs) is made by an ordered sequence of steps (each of which involving its own inputs and outputs).

We may also want to expand the wiring diagrams operad so that the actual spacing between inner stars, or the length of the wires, becomes part of the equation, so to speak. For example, the classical ``little cubes operad" would become a suboperad of such a {\em topological wiring diagrams} operad. The length of wires could encode Shannon-style information theory concepts, giving rise to a richer theory. 

In a totally different direction, one could study the mechanical properties of topological wiring diagrams: if we know how each inner star deforms under stress, and we know how the outer star is composed of inner stars, then perhaps we can calculate how the outer star will deform under stress. This framework could prove useful to materials scientists.

\bibliographystyle{amsalpha}

\end{document}